\newcommand{\nn}{\nonumber}
\newcommand{\wt}{\widetilde}
\newcommand{\be}{\begin{equation}}
\newcommand{\ee}{\end{equation}}
\newcommand{\ba}{\begin{array}}
\newcommand{\ea}{\end{array}}
\newcommand{\lag}{\left\langle}
\newcommand{\rag}{\right\rangle}
\newcommand{\bea}{\begin{eqnarray}}
\newcommand{\eea}{\end{eqnarray}}
\newcommand{\beas}{\begin{eqnarray*}}
\newcommand{\eeas}{\end{eqnarray*}}
\newtheorem{theorem}{Theorem}[section]
\newtheorem{lemma}{Lemma}[section]
\newtheorem{proposition}{Proposition}[section]
\newtheorem{corollary}{Corollary}[section]
\newtheorem{remark}{Remark}[section]
\newtheorem{definition}{Definition}[section]
\numberwithin{equation}{section}
\newcommand{\R}{\mathbb{R}}
\newcommand{\C}{\mathbb{C}}
\newcommand{\calH}{{\mathcal H}}
\newcommand{\calM}{{\mathcal M}}
\newcommand{\calT}{{\mathcal T}}
\DeclareMathOperator*{\argmin}{argmin}
\newcommand{\D}{\mathrm{d}}
\newcommand{\Id}{\mathrm{Id}}
\newcommand{\Real}{\mathrm{Re}\,}
\newcommand{\Imag}{\mathrm{Im}\,}
\author{Caroline Lasser}
\address{Zentrum Mathematik, Technische Universit\"{a}t M\"unchen, 85748 Garching bei M\"unchen, Germany}
\email{classer@ma.tum.de}
\author{Chunmei Su}
\address{Yau Mathematical Sciences Center, Tsinghua University, 100084 Beijing, China; Yanqi Lake Beijing Institute of Mathematical Sciences and Applications, Beijing, 101408, China ({\tt sucm@tsinghua.edu.cn})}
\thanks{This work was partially supported by the Alexander von Humboldt Foundation and the I-Site Future programme.}
\title[Various variational approximations of quantum dynamics]{Various variational approximations of quantum dynamics}
\date{\today}
\begin{document}

\maketitle
\begin{abstract}
We investigate variational principles for the approximation of quantum dynamics that apply for approximation manifolds that do not have complex linear tangent spaces. The first one, dating
back to McLachlan (1964) minimizes the residuum of the time-dependent Schr\"odinger equation, while the second one,
originating from the lecture notes of Kramer--Saraceno (1981), imposes the stationarity of an action functional. We characterize both principles in terms of metric and a symplectic orthogonality conditions, consider their conservation properties, and derive an elementary a-posteriori error estimate. As an application, we revisit the time-dependent Hartree approximation and  frozen Gaussian wave packets.
\end{abstract}

\bigskip
\noindent {\textbf{Key words}.}
Quantum dynamics, McLachlan variational principle, Kramer--Saraceno variational principle, a posteriori error bounds.

\smallskip
\noindent {\textbf{AMS Subject Classifications}.}
35Q40, 65M15, 81Q05.

\pagestyle{myheadings}\thispagestyle{plain}

\section{Introduction}
We consider an abstract linear Schr\"odinger equation on a complex Hilbert space $\mathcal{H}$ with inner product
$\langle \cdot\, | \,\cdot\rangle$
which is antilinear and linear in its first and second argument, respectively:
\be\label{schro}
i\hbar\dot{\psi}(t)=H\psi(t),
\ee
where $\hbar$ is a reduced Planck's constant, $\dot{\psi}(t)$ represents the time-derivative of the solution $\psi(t)$, and $H:D(H)\to\calH$ is a self-adjoint linear operator on $\mathcal{H}$. For the numerical solution of this Schr\"odinger equation variational principles play an important role. They allow the systematic construction of approximate solutions with guaranteed conservation properties and are successfully applied in many different settings of quantum dynamics \cite{GL,HG,MB}. Let $\mathcal{M}\subseteq \mathcal{H}$ be a smooth submanifold of the Hilbert space on which an approximate solution $u(t)$ to the exact solution $\psi(t)$ is sought,
\[
\psi(t)\approx u(t)\in\calM.
\]
We assume for the initial data that $\psi(0) = u(0)\in\calM$. Our aim here is to analyse several time-dependent variational principles for the construction of such approximations with a special focus on manifolds whose tangent spaces are not complex linear.

\subsection{Dirac--Frenkel variational principle}
The most popular variational principle, the Dirac--Frenkel variational principle \cite[\S26]{DF}, is naturally formulated for approximation manifolds that are compatible with the complex Hilbert space geometry in the following sense. Since $\calM$ is assumed to be a manifold, for any $u\in\mathcal{M}$ there is an associated tangent space $\calT_u\calM$ at $u$. The tangent space at $u$ is a real-linear subspace of the Hilbert space that intuitively contains all possible directions in which one can tangentially pass through $u$. Entering the framework developed by C. Lubich in \cite{Lu05} and \cite[\S II.1]{Lu}, we assume that the tangent spaces are not just real-linear, but complex-linear subspaces of the Hilbert space.
We then determine the function $t\rightarrow u(t)\in \mathcal{M}$ by the requirement that for all $t$,
the derivative $\dot{u}(t)$ lies in the tangent space $\mathcal{T}_{u(t)}\mathcal{M}$ and by imposing the orthogonality condition
\be\label{DF}
\langle v\,| \,\dot{u}(t)-\frac{1}{i\hbar}Hu(t)\rangle=0,\quad \forall v\in \mathcal{T}_{u(t)}\mathcal{M}.
\ee
Equivalently, one may rephrase the orthogonality condition in terms of the
orthogonal projection
$P_{u(t)}:\calH\to\calT_{u(t)}\calM$
onto the tangent space at $u(t)$ and state the Dirac--Frenkel principle as the non-linear evolution equation
\be\label{eq:DF}
i\hbar \dot u(t) = P_{u(t)}Hu(t).
\ee
In this complex-linear setting, without any additional assumptions, the Dirac--Frenkel approximation has two striking properties. First, it satisfies the a-posteriori error estimate
\[
\|\psi(t)-u(t)\| \le \int_0^t \mathrm{dist}(\calT_{u(s)}\calM,\frac{1}{i\hbar}Hu(s)) \,\D s
\]
with respect to the Hilbert space norm $\|\cdot\| = \langle \cdot\,|\,\cdot\rangle^{1/2}$. Second, energy is conserved, that is, expectation values for the Schr\"odinger operator $H$ satisfy
\[
\langle u(t)\,|\, Hu(t)\rangle = \langle\psi(0)\,|\, H\psi(0)\rangle,\quad\forall t.
\]
Moreover, if the tangent spaces satisfy the additional property $u(t)\in\calT_{u(t)}\calM$, then also
the norm is conserved.

\subsection{More variational principles}
For more general situation, where the tangent spaces of the approximation manifold are not complex-linear,
but only real-linear subspaces of the Hilbert space, the literature commonly refers to two variational principles:
the minimum distance variational principle (MVP) introduced by A. McLachlan in \cite{Mc},
and the so-called time-dependent variational principle (TDVP) of P. Kramer and M. Saraceno \cite[\S2.1]{TD}. Both principles require
that $\dot u(t)\in\calT_{u(t)}\calM$ for all time $t$, but then add different additional conditions.
\begin{itemize}
\item[] McLachlan (MVP). $\dot{u}(t)$ is chosen to minimize the residuum,
\[\left\|i\hbar\dot{u}(t)-Hu(t)\right\|=\min\limits_{w\in\mathcal{T}_{u(t)}\mathcal{M}}
\|i\hbar w-Hu(t)\|.\]
\item[] Kramer--Saraceno (TDVP). $\dot{u}(t)$ is chosen such that the action functional
\[F(u)=\int_{t_1}^{t_2}\left\langle u(t)\,|\, i\hbar \dot{u}(t)-Hu(t)\right\rangle \D t\]
satisfies the stationarity condition $\delta F(u) = 0$, where the variation is taken with respect
to paths in $\calM$ with fixed end-points at time $t_1$ and $t_2$.
\end{itemize}
Our aim here is to investigate these two principles, to provide equivalent reformulations, to analyse their conservation properties (Section~\ref{sec:McL} and \ref{sec:KS}), to derive local-in-time error estimates (Section~\ref{sec:error}), interpret the upper bounds in terms of energy fluctutations (Section~\ref{sec:energy}) and to apply these findings to the time-dependent
Hartree approximation and frozen Gaussian wave packets (Section \ref{sec:Hartree} and \ref{secfro}).

\section{The McLachlan variational principle}\label{sec:McL}
The McLachlan principle (MVP) applies to the general situation with real-linear tangent spaces. It determines the time derivative $\dot u(t)$ of the variational solution by requiring that $\dot u(t)\in\calT_{u(t)}\calM$ and
by imposing the  minimization
\[
\min\limits_{w\in\mathcal{T}_{u(t)}\mathcal{M}}\|i\hbar w-Hu(t)\| =
\min\limits_{w\in\mathcal{T}_{u(t)}\mathcal{M}}\hbar\,\|w-\frac{1}{i\hbar}Hu(t)\|,
\]
that is,
\be\label{mvp}
\dot u(t) = \argmin_{w\in\calT_{u(t)}\calM} \|w-\frac{1}{i\hbar}Hu(t)\|.
\ee
The tangent space $\calT_{u(t)}\calM$ is a real-linear subspace
of the complex Hilbert space $\calH$. In particular, it is a closed and convex subset.
Therefore, there exists a unique element
of best approximation
$w_0\in\calT_{u(t)}\calM$ to the vector $\tfrac{1}{i\hbar}Hu(t)$
with respect to the norm of the Hilbert space $\calH$,
see for example \cite[Theorem~3.5]{De}. The McLachlan principle sets $\dot u(t) = w_0$.
The element of best approximation can also be described in terms of an orthogonal projection,
when equipping the normed vector space
$(\calH,\|\cdot\|)$ with the compatible real-valued inner product
\[
g:\ \calH\times\calH\to\R,\quad (v,w)\mapsto{\rm Re}\,\langle v\,|\, w\rangle,
\]
that might also be called a metric.
We emphasize, that the norms induced by the complex and the real inner product are the same,
since $\langle v\,|\, v \rangle = \Real\langle v\,|\,v\rangle$ for all $v\in\calH$. We
denote for each $u\in\calM$ by $P_{u}^g:\calH\to\calT_u\calM$ the orthogonal projection onto
the tangent space that is defined by the real-valued inner product,
\be\label{projection}
\Real\langle v\,|\,P_{u}^gw\rangle = \Real\langle v\,|\,w\rangle\quad\forall v\in\calT_u\calM, \ \forall w\in\calH.
\ee
Using the orthogonal projection, we may write the best approximation
as $w_0 = P_{u(t)}^g\frac{1}{i\hbar}Hu(t)$ and thus $u(t)$ as the solution of the non-linear evolution equation
\[
\dot u(t) = P_{u(t)}^g\,\frac{1}{i\hbar} Hu(t).
\]
In comparsion with the Dirac--Frenkel evolution equation \eqref{eq:DF}, we note the difference in the placement
of the imaginary unit, that cannot be pulled outside the orthogonal projector for the case of real-linear
tangent spaces.
The following Proposition \ref{prop:metric} recasts the McLachlan principle in terms of an orthogonality condition
for the real-valued inner product. Even though our previous considerations qualify for a proof, we provide
an additional explicit minimization argument.

\begin{proposition}[Metric formulation]\label{prop:metric}
The McLachlan variational principle requires that $\dot u(t)\in\calT_{u(t)}\calM$ and
\be\label{metric}
\mathrm{Re}\langle v\,| \,\dot{u}(t)-\frac{1}{i\hbar}Hu(t)\rangle=0,\quad \forall v\in \mathcal{T}_{u(t)}\mathcal{M}.
\ee
\end{proposition}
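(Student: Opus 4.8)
The plan is to read the defining minimization \eqref{mvp} as a best-approximation problem in the real inner product space $(\calH,g)$ and to characterize its solution by a first-order optimality condition. Writing $z=\frac{1}{i\hbar}Hu(t)$ for brevity and recalling that the complex- and real-induced norms agree, the objective becomes the quadratic functional
\[
\varphi(w)=\|w-z\|^2=g(w-z,\,w-z),\qquad w\in\calT_{u(t)}\calM.
\]
The key structural point is that $\calT_{u(t)}\calM$ is a \emph{real}-linear subspace, so for any candidate $\dot u(t)\in\calT_{u(t)}\calM$ and any direction $v\in\calT_{u(t)}\calM$ the shifted vector $\dot u(t)+sv$ remains in the tangent space for every $s\in\R$.

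First I would establish necessity. Expanding
\[
\varphi(\dot u(t)+sv)=g(\dot u(t)-z,\,\dot u(t)-z)+2s\,g(v,\,\dot u(t)-z)+s^2\,g(v,v)
\]
yields a real quadratic polynomial in $s$. If $\dot u(t)$ is the minimizer, then $s=0$ must be a critical point of $s\mapsto\varphi(\dot u(t)+sv)$, so that its derivative at $s=0$ vanishes, giving $g(v,\dot u(t)-z)=0$ for every $v\in\calT_{u(t)}\calM$. Unfolding the definition of $g$, this is precisely the metric orthogonality condition \eqref{metric}.

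Then I would establish sufficiency, which also reproves uniqueness. Suppose $\dot u(t)\in\calT_{u(t)}\calM$ satisfies \eqref{metric}. Given any competitor $w\in\calT_{u(t)}\calM$, set $v=w-\dot u(t)\in\calT_{u(t)}\calM$ and use the orthogonal splitting
\[
\|w-z\|^2=\|v\|^2+2\,g(v,\dot u(t)-z)+\|\dot u(t)-z\|^2=\|v\|^2+\|\dot u(t)-z\|^2\ge\|\dot u(t)-z\|^2,
\]
with equality if and only if $v=0$. Hence $\dot u(t)$ is the unique minimizer, and the two formulations coincide. The only point requiring care—and the sole reason the condition features the real part rather than the full complex inner product as in Dirac--Frenkel \eqref{DF}—is that the admissible perturbations $sv$ run over real scalars $s\in\R$ only; a complex-linear tangent space would additionally permit the choice $s=i$ and thereby upgrade \eqref{metric} to the complex orthogonality condition \eqref{DF}.
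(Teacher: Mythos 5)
Your proposal is correct and follows essentially the same route as the paper: both expand the quadratic objective $\|\dot u(t)+\tau v - \frac{1}{i\hbar}Hu(t)\|^2$ along real directions $v\in\calT_{u(t)}\calM$ and extract the first-order stationarity condition $\mathrm{Re}\langle v\,|\,\dot u(t)-\frac{1}{i\hbar}Hu(t)\rangle=0$. The only (minor) difference is in the sufficiency step, where you use the explicit Pythagorean splitting $\|w-z\|^2=\|v\|^2+\|\dot u(t)-z\|^2$ to get global minimality and uniqueness directly, whereas the paper invokes strict convexity together with the strictly positive second variation to pass from stationarity to the (global) minimum.
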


\begin{proof}
We first mention that the local minima of the function
\[
f: \calT_{u(t)}\calM\to[0,\infty), \quad f(w) = \|w - \frac{1}{i\hbar}Hu(t)\|^2
\]
are global ones, since the norm $\|\cdot\|$ is induced from an inner product and thus strictly convex.
We notice that for all $v\in\calT_{u(t)}\calM$ and $\tau\in\R$,
\begin{align*}
f(\dot u(t)+\tau v) &= \|(\dot{u}(t)+\tau v)-\frac{Hu(t)}{i\hbar}\|^2\\
 &=f(\dot u(t))+2\tau\,\mathrm{Re}\langle v\,| \,\dot{u}(t)-\frac{Hu(t)}{i\hbar}\rangle+\tau^2\|v\|^2.
\end{align*}
Therefore, the first and second directional derivatives of $f$ in $\dot u(t)$ satisfy
\begin{align*}
\delta f(\dot u(t),v) &= \lim_{\tau\to0} \frac{f(\dot u(t)+\tau v)-f(\dot u(t))}{\tau}= 2\mathrm{Re}\langle v\,| \,\dot{u}(t)-\frac{1}{i\hbar}Hu(t)\rangle
\end{align*}
and
\[
\delta^2 f(\dot u(t),v)= \lim_{\tau\to0} \frac{f(\dot u(t)+\tau v)-2f(\dot u(t))+f(\dot u(t)-\tau v)}{\tau^2}= 2\|v\|^2.\]
Since the second variation is strictly positive, $\dot u(t)$ is a local minimizer of $f$ if and only if
$f$ is stationary in $\dot u(t)$, that is, if and only if \eqref{metric} holds.
\end{proof}

The above metric reformulation of the McLachlan principle immediately implies its equivalence with the
Dirac--Frenkel principle for manifolds with tangent spaces that are complex linear subspaces of the Hilbert
space. It also allows us to repeat the elementary proof
for norm conservation of the Dirac--Frenkel principle, see also \cite[Theorem~1.4]{Lu}.

\begin{lemma}[Norm conservation]
If $u(t)\in\calT_{u(t)}\calM$ for all $t$, then the McLachlan principle is norm-conserving.
\end{lemma}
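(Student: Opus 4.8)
The plan is to differentiate the squared norm $\|u(t)\|^2 = \langle u(t)\,|\,u(t)\rangle$ in time and show it vanishes. Differentiating the inner product and using its antilinearity in the first argument and linearity in the second, I expect
\[
\frac{\D}{\D t}\|u(t)\|^2 = \langle \dot u(t)\,|\,u(t)\rangle + \langle u(t)\,|\,\dot u(t)\rangle = 2\,\Real\langle u(t)\,|\,\dot u(t)\rangle,
\]
so it suffices to show that $\Real\langle u(t)\,|\,\dot u(t)\rangle = 0$.

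The key step is to exploit the metric orthogonality condition \eqref{metric} from Proposition~\ref{prop:metric} together with the hypothesis $u(t)\in\calT_{u(t)}\calM$. Since $u(t)$ itself is an admissible test vector $v$ in \eqref{metric}, I would substitute $v = u(t)$ to obtain
\[
\Real\langle u(t)\,|\,\dot u(t)\rangle = \Real\langle u(t)\,|\,\tfrac{1}{i\hbar}Hu(t)\rangle.
\]
It then remains to show that the right-hand side vanishes. Writing $\tfrac{1}{i\hbar} = -\tfrac{i}{\hbar}$ and using the self-adjointness of $H$, the quantity $\langle u(t)\,|\,Hu(t)\rangle$ is real, so multiplying by $-i/\hbar$ makes it purely imaginary; hence its real part is zero. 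Concretely, $\Real\langle u(t)\,|\,-\tfrac{i}{\hbar}Hu(t)\rangle = \tfrac{1}{\hbar}\Imag\langle u(t)\,|\,Hu(t)\rangle = 0$ because the expectation value $\langle u(t)\,|\,Hu(t)\rangle$ of a self-adjoint operator is real.

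Combining these gives $\tfrac{\D}{\D t}\|u(t)\|^2 = 0$, so the norm is conserved. The main obstacle here is essentially bookkeeping rather than genuine difficulty: one must correctly track the placement of the imaginary unit and the convention that the inner product is antilinear in its first slot, so that the real part of $\langle u\,|\,\tfrac{1}{i\hbar}Hu\rangle$ genuinely vanishes. The hypothesis $u(t)\in\calT_{u(t)}\calM$ is exactly what licenses choosing $v=u(t)$ as a test vector, mirroring the Dirac--Frenkel argument in \cite[Theorem~1.4]{Lu}; without it, $u(t)$ need not be a tangent vector and the cancellation fails.
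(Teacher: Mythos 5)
Your proof is correct and follows exactly the paper's argument: substituting $v=u(t)$ into the metric orthogonality condition \eqref{metric} (which the hypothesis $u(t)\in\calT_{u(t)}\calM$ licenses) and using the self-adjointness of $H$ to see that $\Real\langle u(t)\,|\,\tfrac{1}{i\hbar}Hu(t)\rangle=0$. The only difference is that you spell out the bookkeeping of the antilinearity and imaginary unit more explicitly than the paper does.
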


\begin{proof}
Taking $v=u(t)$ in \eqref{metric} yields:
\[
\frac{\D}{\D t}\|u(t)\|^2 = 2\,\mathrm{Re}\langle u(t)\,|\,\dot{u}(t)\rangle
=
2 \mathrm{Re}\langle u(t)\,|\,\frac{1}{i\hbar}Hu(t)\rangle=0,
\]
where the last equation uses that $H$ is self-adjoint.
\end{proof}

\section{The Kramer--Saraceno variational principle}\label{sec:KS}

The Kramer--Saraceno principle (TDVP) also applies to the general case with real-linear tangent spaces. It determines the variational solution $u(t)$ as the stationary point of the action functional
\[
F(u)=\int_{t_1}^{t_2}\left\langle u(t)\,|\, i\hbar \dot{u}(t)-Hu(t)\right\rangle \D t,
\]
where the variation is done with respect to fixed end point conditions.
In analogy to our previous considerations that used the metric induced by
the complex inner product, we now use the symplectic form
\[
\omega:\ \calH\times\calH\to\R,\quad(v,w)\mapsto\Imag\langle v\,|\,w\rangle,
\]
that stems from the imaginary part. We characterize the principle
by a symplectic orthogonality condition:

\begin{proposition}[Symplectic formulation] The Kramer--Saraceno variational principle requires that
$\dot u(t)\in\calT_{u(t)}\calM$ and
\be\label{symplectic}
\mathrm{Im}\langle v\,| \,\dot{u}(t)-\frac{1}{i\hbar}Hu(t)\rangle=0,\quad \forall v\in \mathcal{T}_{u(t)}\mathcal{M}.\ee
\end{proposition}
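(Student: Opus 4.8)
The plan is to compute the first variation of the complex-valued action functional $F$ along an admissible path variation and to show that it reduces precisely to the symplectic orthogonality condition \eqref{symplectic}. Concretely, I would fix a smooth one-parameter family $u_\ep(t)\in\calM$ with $u_0=u$ and with fixed end points $u_\ep(t_1)=u(t_1)$, $u_\ep(t_2)=u(t_2)$ for all $\ep$, and write $v(t)=\p_\ep u_\ep(t)\big|_{\ep=0}\in\calT_{u(t)}\calM$ for the associated variation field, which vanishes at $t_1$ and $t_2$. Since $\langle\cdot\,|\,\cdot\rangle$ is continuous and sesquilinear (hence $\R$-bilinear), differentiation under the integral and the product rule give
\[
\delta F(u)[v]=\frac{\D}{\D\ep}\Big|_{\ep=0}F(u_\ep)=\int_{t_1}^{t_2}\Big(\langle v\,|\,i\hbar\dot u-Hu\rangle+\langle u\,|\,i\hbar\dot v-Hv\rangle\Big)\,\D t.
\]

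Next I would simplify the second integrand. An integration by parts in time applied to $\langle u\,|\,i\hbar\dot v\rangle$ produces a boundary term $i\hbar\,[\langle u\,|\,v\rangle]_{t_1}^{t_2}$ that vanishes because $v(t_1)=v(t_2)=0$, leaving $-i\hbar\langle\dot u\,|\,v\rangle$; the self-adjointness of $H$ turns $-\langle u\,|\,Hv\rangle$ into $-\langle Hu\,|\,v\rangle$; and the antilinearity of the first slot rewrites $-i\hbar\langle\dot u\,|\,v\rangle=\langle i\hbar\dot u\,|\,v\rangle$. Writing $r=i\hbar\dot u-Hu$ for the residual, the second integrand thus equals $\langle r\,|\,v\rangle$, so that, combined with the first term $\langle v\,|\,r\rangle=\overline{\langle r\,|\,v\rangle}$,
\[
\delta F(u)[v]=\int_{t_1}^{t_2}\big(\langle v\,|\,r\rangle+\langle r\,|\,v\rangle\big)\,\D t=\int_{t_1}^{t_2}2\,\Real\langle v\,|\,i\hbar\dot u-Hu\rangle\,\D t.
\]

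The decisive algebraic step is then to recognize this real pairing as the symplectic one. Factoring the scalar $i\hbar$ out of the second argument gives $\langle v\,|\,i\hbar\dot u-Hu\rangle=i\hbar\,\langle v\,|\,\dot u-\tfrac{1}{i\hbar}Hu\rangle$, and taking real parts yields $\Real\langle v\,|\,i\hbar\dot u-Hu\rangle=-\hbar\,\Imag\langle v\,|\,\dot u-\tfrac{1}{i\hbar}Hu\rangle$. Hence
\[
\delta F(u)[v]=-2\hbar\int_{t_1}^{t_2}\Imag\langle v\,|\,\dot u(t)-\tfrac{1}{i\hbar}Hu(t)\rangle\,\D t.
\]
Stationarity $\delta F(u)=0$ for every admissible variation, together with the fundamental lemma of the calculus of variations, then forces the integrand to vanish pointwise in $t$; since any $v\in\calT_{u(t)}\calM$ can be realized as the value of such a variation field vanishing at the end points, this is exactly \eqref{symplectic}.

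I expect the main difficulty to be conceptual rather than computational: one must track the antilinearity of the first argument carefully, both when differentiating $\langle u_\ep\,|\,\cdot\rangle$ and when pulling the scalar $i\hbar$ through the inner product, since a single misplaced conjugation would spuriously convert the symplectic condition into the metric condition \eqref{metric}. A secondary point worth noting is that $F$ is complex-valued, yet the first variation above is purely real; this reflects the fact that $\Imag F$ equals $\tfrac{\hbar}{2}\big(\|u(t_2)\|^2-\|u(t_1)\|^2\big)$, which is fixed by the end point data and therefore does not enter the variation. Finally, one should justify by a standard bump-function argument on $\calM$ that arbitrary time-localized tangent fields are admissible, so that the fundamental lemma indeed applies.
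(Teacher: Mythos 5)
Your proposal is correct and follows essentially the same route as the paper: differentiate the action along a fixed-endpoint variation, integrate by parts in time, use the self-adjointness of $H$ to combine the two terms into $2\,\Real\langle v\,|\,i\hbar\dot u-Hu\rangle$, and convert this to $-2\hbar\,\Imag\langle v\,|\,\dot u-\tfrac{1}{i\hbar}Hu\rangle$. Your explicit appeal to the fundamental lemma with time-localized tangent fields, and the side remark that $\Imag F=\tfrac{\hbar}{2}\bigl(\|u(t_2)\|^2-\|u(t_1)\|^2\bigr)$ is fixed by the endpoint data, merely make precise steps the paper leaves implicit.
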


\begin{proof}
Let $(u_\tau(t))_{\tau}$ be a path in $\calM$ with a real parameter $\tau$ that satisfies
$u_0(t) = u(t)$ together with the end-point conditions
\[
u_\tau(t_1) = u(t_1), \quad u_\tau(t_2) = u(t_2)\quad\text{for all}\  \tau.
\]
We differentiate with respect to $\tau$ and obtain
\[
\frac{\D}{\D\tau} F(u_\tau) = \int_{t_1}^{t_2}
\left(\left\langle \partial_\tau u_\tau\,|\, i\hbar \dot u_\tau-Hu_\tau\right\rangle +
\left\langle u_\tau\,|\, i\hbar \partial_\tau\dot u_\tau-H\partial_\tau u_\tau\right\rangle\right) \D t.
\]
Since $H$ is self-adjoint, we have
\[
\int_{t_1}^{t_2}
\left(-\left\langle \partial_\tau u_\tau\,|\,Hu_\tau\right\rangle -
\left\langle u_\tau\,|\, H\partial_\tau u_\tau\right\rangle\right) \D t =
-2\int_{t_1}^{t_2}
{\rm Re}\left\langle \partial_\tau u_\tau\,|\,Hu_\tau\right\rangle \D t.
\]
An integration by parts provides
\[
\int_{t_1}^{t_2}
\left(\left\langle \partial_\tau u_\tau\,|\, i\hbar \dot u_\tau\right\rangle +
\left\langle u_\tau\,|\, i\hbar \partial_\tau\dot u_\tau\right\rangle\right) \D t =
2\int_{t_1}^{t_2}
{\rm Re}\left\langle \partial_\tau u_\tau\,|\, i\hbar \dot u_\tau\right\rangle \D t.
\]
Therefore,
\begin{align*}
\frac{\D}{\D\tau} F(u_\tau)
&= 2\int_{t_1}^{t_2}{\rm Re}\left\langle \partial_\tau u_\tau\,|\, i\hbar \dot u_\tau-Hu_\tau\right\rangle \D t\\
&= -2\hbar\int_{t_1}^{t_2}{\rm Im}\langle\, \partial_\tau u_\tau\,|\,\dot u_\tau-\frac{1}{i\hbar} Hu_\tau\rangle\, \D t.
\end{align*}
Thus the stationarity condition
$\frac{\D}{\D\tau} F(u_\tau)\Big|_{\tau = 0} = 0$
yields \eqref{symplectic}.
\end{proof}

For each $u\in\calM$ we denote by $P_u^\omega:\calH\to\calT_u\calM$ the symplectic projection onto the tangent space, that is defined by the condition
\[
\Imag\langle v\,|\,P_{u}^\omega w\rangle = \Imag\langle v\,|\,w\rangle\quad\forall v\in\calT_u\calM, \ \forall w\in\calH.
\]
In terms of the symplectic projection, we may combine the requirement $\dot u(t)\in \calT_{u(t)}\calM$ and the symplectic orthogonality condition~\eqref{symplectic} and equivalently express them as the non-linear evolution equation
\[
\dot u(t) = P^\omega_{u(t)}\frac{1}{i\hbar}Hu(t).
\]
For the special case that all the tangent spaces are complex linear subspaces, all three projections coincide, that is,
$P_u = P_u^g = P_u^\omega$, and all three variational principles (Dirac--Frenkel, McLachlan, Kramer--Saraceno) are equivalent.

Similarly to the McLachlan principle, the Kramer--Saraceno principle can be norm-conserving, if the manifold's tangent spaces satisfy an additional property. In contrast to the McLachlan principle, the elementary argument
for energy-conservation given for the Dirac--Frenkel principle in \cite[Theorem~1.1]{Lu} also applies
to the Kramer--Saraceno principle.

\begin{lemma}[Conservation properties]\label{lem:cons_sympl}
The Kramer--Saraceno principle satisfies the following conservation properties:
\begin{enumerate}
\setlength{\itemsep}{1ex}
\item
If $iu(t)\in\calT_{u(t)}\calM$, then the Kramer--Saraceno principle is norm-conserving.
\item
The Kramer--Saraceno principle is energy conserving.
\end{enumerate}
\end{lemma}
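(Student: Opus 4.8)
The plan is to read off both conservation laws directly from the symplectic orthogonality condition~\eqref{symplectic}: in each case I would differentiate the conserved quantity, reduce it to a real part of an inner product, and then select the test vector $v$ in~\eqref{symplectic} so that the condition collapses onto exactly that real part. The one algebraic identity I would lean on throughout is that inserting $\frac{1}{i\hbar}=-\frac{i}{\hbar}$ into the second slot swaps real and imaginary parts, namely $\mathrm{Im}\langle v\,|\,\frac{1}{i\hbar}w\rangle=-\frac{1}{\hbar}\mathrm{Re}\langle v\,|\,w\rangle$; together with the self-adjointness of $H$ this is all the machinery required.

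For the norm conservation~(1), I would compute
\[
\frac{\D}{\D t}\|u(t)\|^2 = 2\,\mathrm{Re}\langle u(t)\,|\,\dot u(t)\rangle,
\]
and then observe that the hypothesis $iu(t)\in\calT_{u(t)}\calM$ makes $v=iu(t)$ an admissible test vector in~\eqref{symplectic}. Since the inner product is antilinear in its first argument, $\mathrm{Im}\langle iu\,|\,w\rangle=-\mathrm{Re}\langle u\,|\,w\rangle$, so~\eqref{symplectic} with this choice reads $\mathrm{Re}\langle u(t)\,|\,\dot u(t)-\frac{1}{i\hbar}Hu(t)\rangle=0$. Because $H$ is self-adjoint, $\langle u\,|\,Hu\rangle$ is real, hence $\langle u\,|\,\frac{1}{i\hbar}Hu\rangle$ is purely imaginary and contributes nothing, leaving $\mathrm{Re}\langle u(t)\,|\,\dot u(t)\rangle=0$.

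For the energy conservation~(2), I would differentiate the expectation value and combine the two terms via self-adjointness to get
\[
\frac{\D}{\D t}\langle u(t)\,|\,Hu(t)\rangle = 2\,\mathrm{Re}\langle \dot u(t)\,|\,Hu(t)\rangle.
\]
Here the natural test vector is $v=\dot u(t)$, which lies in $\calT_{u(t)}\calM$ by the very construction of the variational solution, so no extra assumption is needed. Inserting it into~\eqref{symplectic} gives $\mathrm{Im}\langle \dot u\,|\,\dot u-\frac{1}{i\hbar}Hu\rangle=0$; the term $\langle \dot u\,|\,\dot u\rangle=\|\dot u\|^2$ is real and drops out, and the swap identity turns the remainder into $\mathrm{Re}\langle \dot u(t)\,|\,Hu(t)\rangle=0$.

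The computations are elementary, so the only genuinely structural point—and what I would flag as the crux—is the choice of test vector and why the hypotheses differ from the McLachlan case. The symplectic condition pairs naturally with $v=\dot u$, which is always tangent, so energy conservation is automatic; this is the asymmetry already anticipated in the introduction. For norm conservation the relevant vector is $iu(t)$ rather than $u(t)$, which is precisely why the hypothesis $iu(t)\in\calT_{u(t)}\calM$ is the correct one and not the McLachlan condition $u(t)\in\calT_{u(t)}\calM$.
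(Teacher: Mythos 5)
Your proof is correct and follows essentially the same route as the paper's: the paper merely rewrites the symplectic condition \eqref{symplectic} once and for all as $\mathrm{Re}\langle v\,|\,i\hbar\dot u(t)-Hu(t)\rangle=0$ before testing, while you apply the equivalent Re/Im swap identity at the point of use, with the identical test vectors $v=iu(t)$ and $v=\dot u(t)$ and the same appeal to self-adjointness of $H$. Your closing remark on why the hypotheses differ between the two principles matches the asymmetry recorded in Table~\ref{tab}.
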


\begin{proof}
We first observe that the symplectic orthogonality condition~\eqref{symplectic} can equivalently be restated as
\be\label{symplectic_real}
\mathrm{Re}\langle v\,| \,i\hbar\dot{u}(t)-Hu(t)\rangle=0,\quad \forall v\in \mathcal{T}_{u(t)}\mathcal{M}.
\ee
We now consider norm conservation. We take $v=iu(t)$ in \eqref{symplectic_real} and obtain:
\begin{align*}
\frac{\D}{\D t}\|u(t)\|^2 &= 2\,\mathrm{Re}\langle u(t)\,|\,\dot{u}(t)\rangle
= 2\,\mathrm{Re}\langle iu(t)\,|\,i\dot{u}(t)\rangle =
2\, \mathrm{Re}\langle iu(t)\,|\,\frac{1}{\hbar}Hu(t)\rangle=0,
\end{align*}
where the last equation uses that $H$ is self-adjoint. For energy conservation, we take
$v=\dot{u}(t)$ in \eqref{symplectic_real} and immediately derive
\[
\frac{\D}{\D t}\langle u(t)\,|\,Hu(t)\rangle=2\,\mathrm{Re}\langle \dot{u}(t)\,|\,Hu(t)\rangle=
2\,\mathrm{Re}\langle \dot{u}(t)\,|\,i\hbar\dot{u}(t)\rangle=0,
\]
where the first equation uses the self-adjointness of $H$.
\end{proof}

The following Table~\ref{tab} summarizes the conservation properties and the non-linear evolution equations of both variational principles.

\begin{table}[h]
\caption{Conservation and evolution properties of the McLachlan and the Kramer--Saraceno variational principle.}\label{tab}
\begin{center}
\begin{tabular}{c|c|c}
& McLachlan & Kramer--Saraceno\\*[1ex]\hline
& & \\*[-2ex]
norm & if $u(t)\in\calT_{u(t)}\calM$ & if $iu(t)\in\calT_{u(t)}\calM$\\*[2ex]
energy & not guaranteed & yes\\*[2ex]
$\dot u(t) = $ & $P^g_{u(t)}\frac{1}{i\hbar} Hu(t)$ & $P^\omega_{u(t)}\frac{1}{i\hbar} Hu(t)$\\*[0ex]
\end{tabular}
\end{center}
\end{table}

\section{A posteriori error bounds}\label{sec:error}
Our previous discussion of the McLachlan and the Kramer--Saraceno priniciple suggests to define the action of a variational Hamiltonian on the approximate solution $u(t)\in\calM$ according to
\be\label{proj}
H_{u(t)}u(t) = \left\{ \begin{array}{ll} i P^g_{u(t)}\frac{1}{i} Hu(t) & \text{(McLachlan principle)}\\*[1ex]
i P^\omega_{u(t)}\frac{1}{i} Hu(t) & \text{(Kramer--Saraceno principle),}
\end{array}\right.
\ee
and to write the variational equations of motion as
\[
i\hbar\dot u(t) = H_{u(t)}u(t).
\]
We then define
\be\label{unif}
\varepsilon_\calM(u(t)) = \frac{1}{\hbar}\|H_{u(t)}u(t)-Hu(t)\| = \|\dot{u}(t)-\frac{1}{i\hbar}Hu(t)\|.
\ee
Recalling the characterization \eqref{mvp} of the McLachlan principle, that is,
\[
\dot u(t) =
\argmin_{w\in \mathcal{T}_{u(t)}\mathcal{M}}\|w-\frac{1}{i\hbar}Hu(t)\|,
\]
we see that for the McLachlan solution, $\varepsilon_\calM(u(t))$ measures the minimal distance between the tangent space at $u(t)$ and $\frac{1}{i\hbar}Hu(t)$.
We explore the relation of this residual measure to the a posteriori error. Our result is analogous to the one in \cite[Section~2.2]{Lu05} and \cite[Theorem~1.5]{Lu}, where
the estimate was proven for the Dirac--Frenkel principle for manifolds $\calM$ with complex linear tangent spaces.

\begin{theorem}[A posteriori error]\label{theorem}
If $u(0)=\psi(0)\in\mathcal{M}$, then the McLachlan and the Kramer--Saraceno variational principle yield the a posteriori error bound
\be\label{eror}
\|u(t)-\psi(t)\|\le \int_0^t\varepsilon_\mathcal{M}(u(s))\, \D s.
\ee
\end{theorem}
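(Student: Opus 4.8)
The plan is to run a Gronwall-type estimate on the error $e(t) = u(t)-\psi(t)$, which vanishes at $t=0$ by the assumption $u(0)=\psi(0)$. The crucial observation, which makes the argument work uniformly for \emph{both} principles at once, is that the residual
\[
r(t) = \dot u(t) - \frac{1}{i\hbar}Hu(t)
\]
has norm exactly $\varepsilon_\mathcal{M}(u(t))$ by the definition \eqref{unif}, irrespective of which variational Hamiltonian $H_{u(t)}$ enters the equations of motion.

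First I would rewrite the time derivative of the error. Using the exact equation $i\hbar\dot\psi = H\psi$, I decompose
\[
\dot e(t) = \dot u(t) - \frac{1}{i\hbar}H\psi(t) = \left(\dot u(t) - \frac{1}{i\hbar}Hu(t)\right) + \frac{1}{i\hbar}H\bigl(u(t)-\psi(t)\bigr) = r(t) + \frac{1}{i\hbar}He(t).
\]
Next I differentiate the squared norm and split it according to this decomposition,
\[
\frac{\D}{\D t}\|e(t)\|^2 = 2\,\Real\langle e(t)\,|\,\dot e(t)\rangle = 2\,\Real\langle e(t)\,|\,r(t)\rangle + 2\,\Real\langle e(t)\,|\,\tfrac{1}{i\hbar}He(t)\rangle.
\]
The second term vanishes because $H$ is self-adjoint: $\langle e\,|\,He\rangle$ is real, so $\tfrac{1}{i\hbar}\langle e\,|\,He\rangle$ is purely imaginary and has vanishing real part. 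A Cauchy--Schwarz estimate on the first term then yields the differential inequality
\[
\frac{\D}{\D t}\|e(t)\|^2 \le 2\,\|e(t)\|\,\|r(t)\| = 2\,\|e(t)\|\,\varepsilon_\mathcal{M}(u(t)).
\]

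The remaining and only delicate point is the passage from this inequality for $\|e\|^2$ to one for $\|e\|$ itself, since $t\mapsto\|e(t)\|$ need not be differentiable where $e(t)=0$. I expect this to be the main (though entirely standard) technical hurdle, and I would circumvent it by a regularization: for $\dt>0$ set $\phi_\dt(t) = (\|e(t)\|^2+\dt^2)^{1/2}$, which is smooth and satisfies
\[
\frac{\D}{\D t}\phi_\dt(t) = \frac{1}{2\phi_\dt(t)}\frac{\D}{\D t}\|e(t)\|^2 \le \frac{\|e(t)\|}{\phi_\dt(t)}\,\varepsilon_\mathcal{M}(u(t)) \le \varepsilon_\mathcal{M}(u(t)),
\]
where the last step uses $\|e(t)\|\le\phi_\dt(t)$. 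Integrating from $0$ to $t$ and using $\phi_\dt(0)=\dt$ gives $\phi_\dt(t)\le \dt + \int_0^t\varepsilon_\mathcal{M}(u(s))\,\D s$, and letting $\dt\to 0$ produces the claimed a posteriori bound \eqref{eror}. The whole computation is direct once the self-adjointness cancellation and the residual identity are in place.
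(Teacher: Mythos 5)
Your proof is correct, but it takes a genuinely different route from the paper. You share the same starting point --- the error decomposition $\dot e(t) = \frac{1}{i\hbar}He(t) + r(t)$ with $\|r(t)\| = \varepsilon_\calM(u(t))$, valid for both principles by the definition \eqref{unif} --- but from there the paper does not run a Gronwall-type energy estimate: it instead represents the error exactly via the variation-of-constants (Duhamel) formula, $e(t) = \int_0^t \e^{-iH(t-s)/\hbar}\, r(s)\,\D s$, and concludes by unitarity of the propagator, $\|e(t)\| \le \int_0^t \|r(s)\|\,\D s$. Your version replaces the propagator by the skew-adjointness cancellation $\Real\langle e\,|\,\frac{1}{i\hbar}He\rangle = 0$, Cauchy--Schwarz, and the regularization $\phi_\dt = (\|e\|^2+\dt^2)^{1/2}$ to get around the non-differentiability of $t\mapsto\|e(t)\|$ at zeros of $e$ --- a standard but necessary extra step that the Duhamel route avoids entirely. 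What each approach buys: yours is more elementary in that it never invokes the unitary group $\e^{-iHt/\hbar}$ (no functional calculus), and the Gronwall scaffolding would generalize to perturbed or non-conservative situations where the flow is not unitary; the paper's is shorter and gives slightly more, namely an exact integral representation of the error rather than only the bound, and it demands less of the error path (roughly, integrability of the residual, where your differentiation of $\langle e\,|\,He\rangle$ implicitly requires $e(t)\in D(H)$ pointwise). At the level of formal rigor adopted in the paper, both arguments are equally sound, and both yield \eqref{eror} with the same constant.
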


\begin{proof}
Denote the error as $e(t):=u(t)-\psi(t)$. We have
\[
\dot e(t) = \frac{1}{i\hbar}H e(t) + f(t),\quad f(t) =\dot{u}(t)-\frac{1}{i\hbar}Hu(t).\]
We observe that $\|f(t)\| = \varepsilon_\calM(u(t))$ for all $t$.
By the variation of constants formula, we obtain
\[
e(t) = \int_0^t {\rm e}^{-iH(t-s)/\hbar} f(s) \, \D s,
\]
and by unitarity
\[
\|e(t)\| \le \int_0^t \|f(s)\|\, \D s=\int_0^t\varepsilon_\calM(u(s))\D s,
\]
which completes the proof.
\end{proof}

For the McLachlan principle, the minimal distance $\varepsilon_\calM(u)$ allows for an alternative representation,
that is more practical for numerical simulation.

\begin{lemma}[Minimal distance]\label{min} For the McLachlan principle, we have
\be\label{epn}
\varepsilon_\mathcal{M}^2(u)= \frac{1}{\hbar^2}\|Hu\|^2-\|\dot{u}\|^2.
\ee
In particular, $\hbar\|\dot u\|\le \|Hu\|$.
\end{lemma}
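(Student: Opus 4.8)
The plan is to read off the identity \eqref{epn} from the metric orthogonality characterization of Proposition~\ref{prop:metric} together with the Pythagorean theorem for the real inner product $g$. Abbreviate $a = \frac{1}{i\hbar}Hu$. For the McLachlan solution we have $\dot u = P^g_u a \in \calT_u\calM$, and the orthogonality condition \eqref{metric} says precisely that the remainder $\dot u - a$ is $g$-orthogonal to every vector of the tangent space. Since $\dot u$ itself lies in $\calT_u\calM$, this exhibits $a = \dot u + (a - \dot u)$ as an orthogonal decomposition, so that morally $\|a\|^2 = \|\dot u\|^2 + \varepsilon_\calM^2(u)$ by Pythagoras, which is the asserted formula after identifying the norm of $a$.

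To make this explicit I would first expand the squared residual norm. Because $\langle \dot u\,|\,\dot u\rangle = \|\dot u\|^2$ is real and the two mixed terms are complex conjugates of one another, the expansion collapses to
\[
\varepsilon_\calM^2(u) = \|\dot u - a\|^2 = \|\dot u\|^2 - 2\,\Real\langle \dot u\,|\,a\rangle + \|a\|^2.
\]
The decisive step is to insert the admissible test vector $v = \dot u$ into \eqref{metric}, which is legitimate precisely because the McLachlan principle requires $\dot u \in \calT_{u}\calM$. This gives $\Real\langle \dot u\,|\,\dot u - a\rangle = 0$, hence $\Real\langle \dot u\,|\,a\rangle = \|\dot u\|^2$. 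Substituting this back cancels one copy of $\|\dot u\|^2$ and leaves
\[
\varepsilon_\calM^2(u) = \|a\|^2 - \|\dot u\|^2.
\]

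It then remains only to compute $\|a\|^2 = \|\frac{1}{i\hbar}Hu\|^2 = \frac{1}{\hbar^2}\|Hu\|^2$, using that multiplication by $1/i$ is an isometry, which yields \eqref{epn}. The inequality $\hbar\|\dot u\|\le\|Hu\|$ follows immediately from the nonnegativity $\varepsilon_\calM^2(u)\ge 0$. I do not anticipate a genuine obstacle: the computation is elementary, and the only point requiring care is the admissibility of the test direction $v = \dot u$, which is guaranteed by the tangency constraint of the McLachlan principle.
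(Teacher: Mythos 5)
Your proof is correct and follows essentially the same route as the paper: take $v=\dot u$ in the metric orthogonality condition \eqref{metric} to get $\Real\langle\dot u\,|\,\frac{1}{i\hbar}Hu\rangle=\|\dot u\|^2$, expand the squared residual norm, and cancel. The Pythagorean framing is a nice conceptual gloss, but the computation coincides with the paper's argument step for step.
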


\begin{proof} Taking $v=\dot{u}$ in \eqref{metric}, we get $\|\dot{u}\|^2=\mathrm{Re}\langle \dot{u}\,|\, \frac{1}{i\hbar}Hu\rangle$, which then implies
\begin{align*}
\varepsilon^2_\mathcal{M}(u)&=\|\dot{u}-\frac{1}{i\hbar}Hu\|^2=\|\dot{u}\|^2- 2\mathrm{Re}\langle \dot{u}\,|\,\frac{1}{i\hbar}Hu\rangle+\frac{1}{\hbar^2}\|Hu\|^2\\
&=\frac{1}{\hbar^2}\|Hu\|^2-\|\dot{u}\|^2,
\end{align*}
and the result is concluded.
\end{proof}

\section{Energy fluctuations}\label{sec:energy}

The obtained evolution equation for the McLachlan principle can be cast in the usual Schr\"odinger type format
\[
i\hbar\dot u(t) = H_{u(t)} u(t)
\]
where the variational Hamiltonian $H_{u(t)}$, that acts linearly on the Hilbert space $\calH$ but depends non-linearly on the variational solution $u(t)$, is given by
\[
H_{u(t)} = i \, P_{u(t)}^g \frac{1}{i}H
\]
The orthogonal projection $P_{u(t)}^g$ is defined with respect to the real-linear inner product
$g = \mathrm{Re}\,\langle\cdot\,|\,\cdot\rangle$. For the special case of complex-linear tangent spaces
it is natural to work with the full complex inner product $\langle\cdot\,|\,\cdot\rangle$, the Dirac--Frenkel principle and the complex-linear orthogonal projection, which then supersedes the double placement of the imaginary unit $i$.
Following the interpretation of R. Martinazzo and I. Burghardt in \cite{MB}, we next express the minimal distance
$\varepsilon_\calM(u(t)) = \|Hu(t)-H_{u(t)}u(t)\|/\hbar$ of the variational solution $u(t)$ in terms of energies and energy fluctuations, which are key concepts for the statistical interpretation of quantum mechanics.

\medskip
\begin{proposition}[Energy fluctuation]\label{pro1}
Assume that the McLachlan variational solution $u(t)$ is normalised $\|u(t)\|=1$.
Then, the minimal distance satisfies
\begin{align}\label{ener}
\varepsilon_\calM^2(u) &= \frac{1}{\hbar^2}
\left(\|(H-E_0)u\|^2-\|(H_{u}-E_{u})u\|^2\right) + \delta_{\rm en}(u)
\end{align}
for all time $t$, where $\delta_{\rm en}(u) = \frac{1}{\hbar^2} \left(E_0^2 -|E_{u}|^2\right)$ is the scaled difference of the squares of the energy for the Hamiltonian $H$, that is, $E_0(t) = \lag u(t)\,|\,Hu(t)\rag$, and the energy for the variational Hamiltonian $H_u$, that is, \[E_{u(t)} = \lag u(t)\,|\,H_{u(t)}u(t)\rag.\]

\end{proposition}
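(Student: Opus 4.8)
The plan is to reduce everything to the identity from Lemma~\ref{min} and then expand the two variance-type terms on the right-hand side. First I would invoke Lemma~\ref{min}, which gives $\varepsilon_\calM^2(u) = \frac{1}{\hbar^2}\|Hu\|^2 - \|\dot u\|^2$. Since the variational equation of motion reads $i\hbar\dot u = H_u u$, we have $\|\dot u\|^2 = \frac{1}{\hbar^2}\|H_u u\|^2$, so the assertion reduces to verifying
\[
\frac{1}{\hbar^2}\left(\|Hu\|^2 - \|H_u u\|^2\right) = \frac{1}{\hbar^2}\left(\|(H-E_0)u\|^2 - \|(H_u-E_u)u\|^2\right) + \delta_{\rm en}(u).
\]

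The core of the argument is then to expand each squared norm on the right. For the first, I would use that $H$ is self-adjoint, so the energy $E_0 = \langle u\,|\,Hu\rangle$ is real; expanding $\|(H-E_0)u\|^2$ and using $\|u\|=1$ together with $\langle Hu\,|\,u\rangle = \overline{E_0} = E_0$ collapses the cross terms and yields $\|(H-E_0)u\|^2 = \|Hu\|^2 - E_0^2$. For the second term I would proceed analogously, but keeping track of the fact that $H_u$ is in general \emph{not} self-adjoint, so that $E_u = \langle u\,|\,H_u u\rangle$ is generally complex. Using $\langle H_u u\,|\,u\rangle = \overline{E_u}$ and $\|u\|=1$, the cross terms now recombine into $|E_u|^2$ rather than $E_u^2$, giving $\|(H_u-E_u)u\|^2 = \|H_u u\|^2 - |E_u|^2$.

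Substituting both expansions, the right-hand side becomes $\frac{1}{\hbar^2}\left(\|Hu\|^2 - E_0^2 - \|H_u u\|^2 + |E_u|^2\right) + \frac{1}{\hbar^2}\left(E_0^2 - |E_u|^2\right)$, in which the $E_0^2$ and $|E_u|^2$ contributions cancel against those packaged into $\delta_{\rm en}(u)$, leaving precisely $\frac{1}{\hbar^2}\left(\|Hu\|^2 - \|H_u u\|^2\right)$, which matches the left-hand side. The one step demanding care — and the reason $\delta_{\rm en}$ is written with $|E_u|^2$ instead of $E_u^2$ — is exactly the asymmetry between the self-adjoint $H$, with real energy $E_0$, and the generally non-self-adjoint variational Hamiltonian $H_u$, with complex energy $E_u$. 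Getting this complex-conjugation bookkeeping right in the cross terms is the only genuine subtlety; once it is handled, the identity follows by elementary algebra.
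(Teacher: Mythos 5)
Your proposal is correct and follows essentially the same route as the paper's proof: both reduce to $\varepsilon_\calM^2(u)=\frac{1}{\hbar^2}\left(\|Hu\|^2-\|H_uu\|^2\right)$ via Lemma~\ref{min} and the evolution equation $i\hbar\dot u = H_u u$, then expand $\|(H-E_0)u\|^2=\|Hu\|^2-E_0^2$ and $\|(H_u-E_u)u\|^2=\|H_uu\|^2-|E_u|^2$, with exactly the conjugation bookkeeping you highlight for the non-self-adjoint $H_u$. You even make explicit the step $\|\dot u\|=\frac{1}{\hbar}\|H_uu\|$ that the paper leaves implicit in its appeal to Lemma~\ref{min}.
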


\begin{proof}
We firstly observe that
\[
\|(H-E_0)u\|^2= \|Hu\|^2 - 2\,{\rm Re}\langle Hu\,|\, E_0 u\rangle + |E_0|^2\,\|u\|^2= \|Hu\|^2 - E_0^2,
\]
where we have used that $\langle u\,|\,Hu\rangle$ is a real number and that $u$ is normalized.
The same consideration applies for the variational Hamiltonian $H_u$ and yields
\[
\|(H_u-E_u)u\|^2 = \|H_uu\|^2 - |E_u|^2.
\]
Thus, by Lemma~\ref{min}, we have
\begin{align*}
\varepsilon_\calM^2(u) &=\frac{1}{\hbar^2}\left(\|Hu\|^2-\|H_u u\|^2\right)\\
&=\frac{1}{\hbar^2}
\left( \|(H-E_0)u\|^2- \|(H_u-E_u)u\|^2\right) +
\frac{1}{\hbar^2}\big( E_0^2 - |E_u|^2\big),
\end{align*}
which completes the proof.
\end{proof}

If the approximation manifold has properties similar to the
one chosen in Section~\ref{sec:Hartree} for the time-dependent Hartree approximation
or to the frozen Gaussians' manifold considered in Section~\ref{secfro},
then the minimal distance is a mere difference between energy fluctuations.
\smallskip

\begin{corollary}\label{coro}
Assume that the McLachlan variational solution $u(t)$ satisfies $\|u(t)\|=1$, $iu(t)\in \calT_{u(t)}\calM$.
Then, the energy expectation values for $H$ and $H_u$ coincide, $E_{u(t)} = E_0(t)$
for all time $t$, and the minimal distance is solely the difference of the energy fluctuations
\[
\varepsilon_\calM^2(u)= \frac{1}{\hbar^2}
\left(\|(H-E_0)u\|^2-\|(H_u-E_0)u\|^2\right).\]
\end{corollary}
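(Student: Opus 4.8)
The plan is to prove the statement in two stages: first establish the energy identity $E_{u(t)} = E_0(t)$ for all $t$, and then feed it into the decomposition of Proposition~\ref{pro1}, where the term $\delta_{\rm en}(u)$ collapses to zero. The second stage is a one-line substitution, so essentially all the content sits in the first stage.

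To obtain $E_u = E_0$, I would reintroduce the residual $r := \dot u - \tfrac{1}{i\hbar}Hu$. By the definition \eqref{proj} of $H_u$ and the McLachlan evolution equation $i\hbar\dot u = H_u u$, one has $H_u u - Hu = i\hbar\, r$, and therefore $E_u - E_0 = \langle u\,|\,(H_u-H)u\rangle = i\hbar\,\langle u\,|\,r\rangle$. It thus suffices to show that $\langle u\,|\,r\rangle = 0$ as an element of $\C$, which I would do by killing its real and imaginary parts separately. For the imaginary part, I would exploit the hypothesis $iu\in\calT_{u}\calM$: taking $v = iu$ in the metric orthogonality \eqref{metric} and using the antilinearity $\langle iu\,|\,r\rangle = -i\langle u\,|\,r\rangle$ together with $\Real(-iz)=\Imag z$ gives $\Imag\langle u\,|\,r\rangle = 0$. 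For the real part, I would use the standing normalisation $\|u(t)\|=1$: differentiating $\langle u\,|\,u\rangle\equiv 1$ yields $\Real\langle u\,|\,\dot u\rangle = 0$, and since $\langle u\,|\,\tfrac{1}{i\hbar}Hu\rangle = -\tfrac{i}{\hbar}E_0$ is purely imaginary (because $E_0$ is real by self-adjointness of $H$), we get $\Real\langle u\,|\,r\rangle = \Real\langle u\,|\,\dot u\rangle = 0$. Combining the two, $\langle u\,|\,r\rangle = 0$, hence $E_u = E_0$; in particular $E_u$ is real and $|E_u|^2 = E_0^2$.

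With the identity in hand, the final stage is immediate. Substituting $E_u = E_0$ into the formula of Proposition~\ref{pro1} makes $\delta_{\rm en}(u) = \tfrac{1}{\hbar^2}\bigl(E_0^2 - |E_u|^2\bigr)$ vanish and replaces $\|(H_u - E_u)u\|^2$ by $\|(H_u - E_0)u\|^2$, which is exactly the claimed expression for $\varepsilon_\calM^2(u)$.

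The main obstacle, and the place where both hypotheses are genuinely needed, is the vanishing of $\langle u\,|\,r\rangle$. The metric orthogonality alone controls only one of its two components: $v = iu$ delivers the imaginary part, whereas the real part is \emph{not} supplied by $iu\in\calT_u\calM$ (in contrast to the norm-conservation lemma of Section~\ref{sec:McL}, which requires $u\in\calT_u\calM$) and must instead come from the assumed constancy of the norm. Making transparent that these two independent inputs each account for exactly one component of $\langle u\,|\,r\rangle$ is the crux of the argument.
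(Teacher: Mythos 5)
Your proposal is correct and is essentially the paper's own argument: the paper likewise establishes $\lag u\,|\,i\hbar\dot u - Hu\rag = 0$ (which is $i\hbar\lag u\,|\,r\rag$ in your notation) by using $v=iu$ in the McLachlan condition \eqref{metric} for one component and the constancy of $\|u(t)\|$ (with $E_0\in\R$) for the other, and then concludes $E_{u}=E_0$ and substitutes into Proposition~\ref{pro1}. Your closing observation that the two hypotheses each supply exactly one component of the complex inner product matches the structure of the paper's proof.
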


\begin{proof} Norm conservation implies $\mathrm{Re}\lag u\,|\,\dot{u}\rag=0$ and therefore also
\[\mathrm{Im}\lag u\,|\,i\hbar\dot{u}-Hu\rag=0,\]
since $\lag u\,|\,Hu\rag\in\mathbb{R}$.
On the other hand, since $iu\in \calT_{u} \calM$, the McLachlan variational principle gives $\mathrm{Im}\lag iu\,|\, i\hbar \dot{u}-Hu\rag=0$, which implies 
\[\mathrm{Re}\lag u\,|\,i\hbar\dot{u}-Hu\rag=0.\] Thus we get
\be\label{eq:cond}
\lag u\,|\, i\hbar \dot u-Hu\rag = 0,
\ee
so that the energy expectation values coincide, i.e., 
\[\langle u\,|\, H_uu\rangle=\lag u\,|\, i\hbar\dot{u}\rag=\langle u\,|\, Hu\rangle.\]
\end{proof}
\smallskip
\begin{remark}
For the case of complex linear tangent spaces $\calT_u\calM$ that contain $u$ for all $u\in\calM$, the Dirac--Frenkel variational solution (initialized with normalized data $u(0)=\psi_0$) naturally fulfills the conditions in Corollary \ref{coro}, so that  in this case, the minimal distance is a mere difference of energy fluctuations.
\end{remark}


\section{Time-dependent Hartree approximation}\label{sec:Hartree}
Consider the Schr\"odinger equation \eqref{schro} with Hamiltonian
\be\label{Ham}
H=\sum_{n=1}^N h_n+V
\ee
where each $h_n$ represents a single-particle Hamiltonian acting on the variable $x_n\in\mathbb{R}^3$,
and $V:=V(x): \mathbb{R}^{3N}\rightarrow \mathbb{R}$ is the interaction potential
that couples all variables in $x=(x_1, \ldots, x_N)$.
The $n$-th single-particle Hamiltonian could be, but need not be the Laplacian $h_n = -\frac{\hbar^2}{2M_n}\Delta_{x_n}$.
Formulating and analysing the time-dependent Hartree approximation requires the inner
products of $L^2(\R^{3N})$, $L^2(\R^3)$, and $L^2(\R^{3(N-1)})$, which we all denote by $\lag\cdot\,|\,\cdot\rag$. The context will clarify, whether the integration is with respect to $x$, one $x_n$ or all variables except one $x_n$.

\subsection{The Hartree manifold}
We consider the Hartree manifold
\[\mathcal{M}=\Big\{u\in L^2(\mathbb{R}^{3N})\mid\, u(x)=\prod\limits_{n=1}^N\varphi_n(x_n),\,\,\, \varphi_n\in L^2(\mathbb{R}^3),\,\,\, \lag\varphi_n\,|\,\varphi_n\rag=1\Big\},\]
that consists of tensor products of normalized square integrable functions, that are called the single-particle functions.
We note that $u\in\calM$ satisfies $\|u\| = 1$ and that its representation as $u(x)=\varphi_1(x_1)\varphi_2(x_2)\cdots\varphi_N(x_N)$ is not unique: for any numbers $a_n\in\mathbb{C}$ satisfying $|a_n|=1$ and $a_1\cdots a_N = 1$, $u$ remains unaltered under the transformation $\varphi_n\rightarrow a_n\varphi_n$, $n=1, \ldots, N$.

\begin{lemma}[Tangent space]\label{Tud}
For given $u=\prod\limits_{n=1}^N\varphi_n\in \calM$, we denote by
$\psi_n=\prod\limits_{m\neq n}\varphi_m\in L^2(\R^{3(N-1)})$
the corresponding $n$-th single-hole wave function. Then, any tangent function $v\in\calT_u\calM$ has a unique representation of the form
$v=\sum\limits_{n=1}^Nv_n\psi_n$
with square integrable functions $v_n\in L^2(\mathbb{R}^3)$ satisfying the gauge conditions
\be\label{eq:gauge}
\mathrm{Re}\lag v_1\,|\,\varphi_1\rag = 0\quad\text{and}\quad
\lag v_n\,|\,\varphi_n\rag = 0\quad \forall n=2,\ldots,N.
\ee
In particular, the tangent space $\calT_u\calM$ at $u$ is not complex-linear.
We have $iu\in\calT_u\calM$ but $u\not\in\calT_u\calM$.
\end{lemma}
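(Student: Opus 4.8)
The plan is to compute the tangent space directly by differentiating admissible curves and then to exploit the gauge freedom of the tensor-product representation. First I would take a smooth curve $s\mapsto u(s)=\prod_{n=1}^N\varphi_n(s)$ in $\calM$ with $u(0)=u$ and each $\varphi_n(s)$ normalised, and differentiate at $s=0$ by the product rule: this gives $v=\sum_{n=1}^N v_n\psi_n$ with $v_n=\dot\varphi_n(0)$. Differentiating $\langle\varphi_n(s)\,|\,\varphi_n(s)\rangle=1$ yields the necessary condition $\mathrm{Re}\langle v_n\,|\,\varphi_n\rangle=0$ for every $n$. Conversely, for any family $(v_n)$ obeying these real-orthogonality constraints one can normalise the straight curves $\varphi_n+s\,v_n$ (whose norm is $1+O(s^2)$ precisely because $\mathrm{Re}\langle v_n\,|\,\varphi_n\rangle=0$) to produce an admissible curve realising $v$. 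Hence $\calT_u\calM$ is exactly the set of $\sum_n v_n\psi_n$ with $\mathrm{Re}\langle v_n\,|\,\varphi_n\rangle=0$, which is a real-linear subspace.

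The second step is to identify the ambiguity of this representation. Since $\varphi_n\psi_n=u$ for every $n$, replacing $v_n\mapsto v_n+c_n\varphi_n$ changes $v$ by $(\sum_n c_n)\,u$, so two families yield the same $v$ if and only if they differ by such a shift with $\sum_n c_n=0$. Requiring the shift to preserve $\mathrm{Re}\langle v_n\,|\,\varphi_n\rangle=0$ forces each $c_n$ to be purely imaginary, because $\langle v_n+c_n\varphi_n\,|\,\varphi_n\rangle=\langle v_n\,|\,\varphi_n\rangle+\overline{c_n}$. Writing $\langle v_n\,|\,\varphi_n\rangle=is_n$ with $s_n\in\R$, the choice $c_n=is_n$ for $n\ge2$ together with $c_1=-i\sum_{n\ge2}s_n$ (to keep $\sum_n c_n=0$) annihilates $\langle v_n\,|\,\varphi_n\rangle$ for $n\ge2$ while leaving only $\mathrm{Re}\langle v_1\,|\,\varphi_1\rangle=0$ for the first slot. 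This establishes existence of the claimed gauge.

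For uniqueness I would set up an orthogonal decomposition. Splitting $v_n=\alpha_n\varphi_n+w_n$ with $w_n\perp\varphi_n$, one has $v=(\sum_n\alpha_n)\,u+\sum_n w_n\psi_n$. The key computation is that for $w_n\perp\varphi_n$, $w_m\perp\varphi_m$ with $m\ne n$ the cross inner products factorise, $\langle w_n\psi_n\,|\,w_m\psi_m\rangle=\langle w_n\,|\,\varphi_n\rangle\langle\varphi_m\,|\,w_m\rangle=0$ and $\langle u\,|\,w_n\psi_n\rangle=\langle\varphi_n\,|\,w_n\rangle=0$, so that $u,w_1\psi_1,\dots,w_N\psi_N$ are mutually orthogonal. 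Consequently $\langle u\,|\,v\rangle=\sum_n\alpha_n$ and each $w_n$ are uniquely determined by $v$. Under the gauge conditions $\alpha_n=0$ for $n\ge2$ and $\mathrm{Re}\,\alpha_1=0$, whence $v_n=w_n$ for $n\ge2$ and $v_1=w_1+\langle u\,|\,v\rangle\varphi_1$ are all fixed, giving uniqueness; the normalisation constraints make $\mathrm{Re}\sum_n\alpha_n=0$ automatically, consistent with $\mathrm{Re}\langle u\,|\,v\rangle=0$.

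Finally I would read off the three structural statements. Taking $v_1=i\varphi_1$ and $v_n=0$ for $n\ge2$ satisfies every constraint and represents $iu$, so $iu\in\calT_u\calM$. Conversely, if $u\in\calT_u\calM$, then by uniqueness of the orthogonal decomposition any representation $u=\sum_n v_n\psi_n$ must have $w_n=0$ and $\sum_n\alpha_n=1$; the constraints $\mathrm{Re}\,\alpha_n=0$ would then give $\mathrm{Re}\sum_n\alpha_n=0\ne1$, a contradiction, so $u\notin\calT_u\calM$. Non-complex-linearity is then immediate, since a complex-linear subspace containing $iu$ would contain $-i\cdot iu=u$. I expect the main obstacle to be the second step, namely correctly tracking how the tensor-product gauge freedom interacts with the normalisation constraint to produce the asymmetric conditions (full orthogonality for $n\ge2$ versus only real-orthogonality for $n=1$), together with the orthogonality bookkeeping in the third step that makes uniqueness work.
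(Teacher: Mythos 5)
Your proof is correct and follows the same overall strategy as the paper: differentiate normalized curves to get the representation $v=\sum_n v_n\psi_n$ with $\mathrm{Re}\lag v_n\,|\,\varphi_n\rag=0$, then use the purely imaginary gauge shifts $v_n\mapsto v_n+i s_n\varphi_n$, $\sum_n s_n=0$, with exactly the same choice of $s_n$ as the paper, to reach the asymmetric gauge \eqref{eq:gauge}. The one place you genuinely diverge is uniqueness: the paper tests a doubly-represented $v$ against tensor products $\varphi_1\cdots\vartheta\cdots\varphi_N$ with an arbitrary $\vartheta$ in the $n$-th slot and deduces $v_n=\wt v_n$ slot by slot, whereas you first split $v_n=\alpha_n\varphi_n+w_n$ with $w_n\perp\varphi_n$ and show that $u,w_1\psi_1,\dots,w_N\psi_N$ are mutually orthogonal, so that $\sum_n\alpha_n=\lag u\,|\,v\rag$ and the $w_n$ are intrinsically determined by $v$; both arguments rest on the same factorization of inner products over tensor slots, but your packaging isolates the source of the non-uniqueness (the one-dimensional $u$-direction) more transparently and makes the claims $iu\in\calT_u\calM$, $u\notin\calT_u\calM$ fall out of the decomposition rather than requiring a separate check. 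You also supply two points the paper leaves implicit: the converse inclusion (every gauged sum is realized by a curve, via normalizing the straight lines $\varphi_n+sv_n$, where $\mathrm{Re}\lag v_n\,|\,\varphi_n\rag=0$ ensures the normalization does not perturb the derivative) and the explicit contradiction showing $u\notin\calT_u\calM$. One small caution: in your second paragraph the claim that two families represent the same $v$ \emph{only if} they differ by a shift $v_n\mapsto v_n+c_n\varphi_n$ with $\sum_n c_n=0$ is asserted before it is justified; this is harmless, since only the ``if'' direction is used for existence and your third paragraph proves uniqueness independently (and retroactively establishes the ``only if''), but as written the sentence overstates what has been shown at that point.
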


\begin{proof} Suppose $g(s)=\prod\limits_{n=1}^N f_n(x_n, s)\in \calM$ is a path passing through $u$ with $f_n(\cdot,s)\in L^2(\R^3)$, $\lag f_n(\cdot, s)\,|\,f_n(\cdot, s)
\rag=1$ and $f_n(x_n, 0)=\varphi_n(x_n)$. We find that
\[\dot g(s)=\sum\limits_{n=1}^Nf_1(x_1, s)\cdots f_{n-1}(x_{n-1}, s)\dot{f}_n(x_n, s)f_{n+1}(x_{n+1}, s)\cdots f_N(x_N, s),\]
which gives
$\dot g(0)=\sum\limits_{n=1}^N\dot{f}_n(x_n, 0)\psi_n$. Due to the normalisation we also have
\[\mathrm{Re}\lag \dot{f}_n(\cdot, s)\,|\,f_n(\cdot, s)
\rag=\frac{1}{2}\frac{\D}{\D s}\lag f_n(\cdot, s)\,|\,f_n(\cdot, s)
\rag=0.\]
Taking $s=0$ gives $\mathrm{Re}\lag \dot{f}_n(\cdot, 0)\,|\,\varphi_n(\cdot)\rag = 0$,
and thus for any $v\in\calT_u\calM$  the existence of a representation
$v = \sum\limits_{n=1}^N v_n\psi_n$
such that $\mathrm{Re}\lag v_n\,|\,\varphi_n\rag = 0$ for all $n$. Such a representation is
not unique, since we can choose any $\alpha\in\R^N$ with $\alpha_1+\cdots+\alpha_N = 0$ and
set $\wt v_n = v_n-i\alpha_n\varphi_n$. Then,
\[
\sum_{n=1}^N \wt v_n\psi_n = \sum_{n=1}^N v_n\psi_n - i\big(\sum_{n=1}^N\alpha_n\big)u = v
\]
and $\mathrm{Re}\lag\wt v_n\,|\,\varphi_n\rag = \mathrm{Re}\lag i\alpha_n\varphi_n\,|\,\varphi_n\rag = 0$ for all $n$. However, the particular choice
\[
\alpha_n = -\mathrm{Im}\lag v_n\,|\,\varphi_n\rag\quad\forall n=2,\ldots,N\quad\text{and}\quad
\alpha_1 = -(\alpha_2+\cdots+\alpha_N)
\]
provides a representation of the tangent function $v$ that satisfies the claimed gauge condition \eqref{eq:gauge}.
On the other hand, let us consider
\[
v=\sum\limits_{n=1}^N v_n\psi_n=\sum\limits_{n=1}^N \widetilde{v}_n\psi_n
\in\calT_u\calM\]
with $(v_n)_{n=1}^N$ and $(\wt v_n)_{n=1}^N$ satisfying \eqref{eq:gauge}. Then, for any $n=1,\ldots,N$ and any function $\vartheta\in L^2(\R^3)$, the
tensor product $\varphi_1(x_1)\cdots \vartheta(x_n)\cdots\varphi_N(x_N)$ that carries~$\vartheta$ as the
$n$-th factor satisfies
\begin{align*}
\lag \varphi_1\cdots\vartheta\cdots\varphi_N\,|\,v\rag &= \sum\limits_{k=1}^N
\lag \varphi_1\cdots\vartheta\cdots\varphi_N\,|\,v_k\psi_k\rag\\
&=\lag \vartheta\,|\,\varphi_n\rag\sum\limits_{k\neq n}
\lag \varphi_k\,|\,v_k\rag  + \lag \vartheta\,|\,v_n\rag\\
&=\lag \vartheta\,|\,\varphi_n\rag\sum\limits_{k\neq n}
\lag \varphi_k\,|\,\wt{v}_k\rag  + \lag \vartheta\,|\,\wt{v}_n\rag.
\end{align*}
For $n=1$, this implies $\lag \vartheta\,|\,v_1\rag=\lag \vartheta\,|\,\wt{v}_1\rag$ and thus $v_1=\wt v_1$.
For $n\ge 2$, the above relation means
\[
\lag \vartheta\,|\,\varphi_n\rag\lag \varphi_1\,|\,v_1\rag +
\lag \vartheta\,|\,v_n\rag= \lag \vartheta\,|\,\varphi_n\rag\lag \varphi_1\,|\,\wt v_1\rag +
\lag \vartheta\,|\,\wt v_n\rag,
\]
and we obtain $v_n=\wt v_n$ for all $n$. As for the representation of $iu$, we choose
$v_1 = i\varphi_1$ and $v_2 = \cdots = v_N = 0$ to see that $iu\in\calT_u\calM$.
\end{proof}

\subsection{The variational equations of motion}
We consider the McLachlan and the Kramer--Saraceno variational principle on the Hartree manifold~$\calM$, i.e., we find
\[
u(x, t)=\varphi_1(x_1, t)\cdots\varphi_N(x_N,t)\in\calM
\]
with each
$\lag \varphi_n\,|\,\varphi_n\rag = 1$ such that  $\dot u(t)\in\calT_{u(t)}\calM$ and
\begin{align*}
\left\{ \begin{array}{ll}
\mathrm{Re}\,\langle v\,| \,\dot{u}(t)-\frac{1}{i\hbar}Hu(t)\rangle=0 & \text{(McLachlan principle)},\\*[2ex]
\mathrm{Re}\,\langle v\,| \,i\hbar\dot{u}(t)-Hu(t)\rangle=0 & \text{(Kramer--Saraceno principle)},
\end{array}\right.
\end{align*}
for all $v\in \mathcal{T}_{u(t)}\mathcal{M}$.
For both variational principles, the equations of motion for the single-particle functions depend on mean-field Hamiltonians,
which we define next.
\smallskip

\begin{definition}[Mean-field operator]
For $u=\varphi_1\cdots\varphi_N\in\calM$ and any linear operator $A:D(A)\to\calH$
whose domain contains $\calM$, we define the corresponding \emph{mean-field operator for the $n$-th particle}, $n=1,\ldots,N$, by
\[
A_n = \lag \psi_n\,|\,A\psi_n\rag
\]
as a linear operator with domain in $L^2(\R^3)$, where the inner product is over all variables except $x_n$ and $\psi_n = \prod\limits_{m\neq n}\varphi_m$ is the $n$-th single-hole function.
\end{definition}

We note that the mean-field operators $A_n$
are invariant under changes in the representation of $u$, since they depend quadratically on the single-hole
functions and are thus insensitive to gauge factors.
Moreover, their expectation values agree with the overall expectation value in the sense that
\[
\lag u\,|\, Au\rag = \lag \varphi_n\,|\, A_n\varphi_n\rag
\]
for all $n=1,\ldots,N$.
We obtain the following equations of motion, that are of the same form for both variational
principles, however, based on different time-dependent gauge factors:

\begin{theorem}[Equations of motion]\label{th31}
For initial data $\psi(x, 0)=\psi_0(x)=\varphi_1(x_1, 0)\ldots\varphi_N(x_N, 0)\in \mathcal{M}$, the
variational approximations given by the McLachlan and the Kramer--Saraceno principles
on the Hartree manifold~$\calM$ have the form
\[u(x_1, \ldots, x_N, t)=\varphi_1(x_1, t)\ldots\varphi_N(x_N,t),\]
where each single-particle function satisfies the mean field equation
\be\label{eqH}
i\hbar\dot{\varphi}_n(t)=
(H_n(t)+c_1(t)\delta_{1n}-E_0)\varphi_n(t)
\ee
with $H_n(t)$ the $n$-th mean-field Hamiltonian, $E_0(t)=\lag u(t)\,|\, Hu(t)\rag$
the total energy, and $c_1(t)=i\hbar\lag\varphi_1(t)\,|\,\dot\varphi_1(t)\rag$ a real-valued gauge factor.

\smallskip
For the Kramer--Saraceno principle, $E_0 = \lag \psi_0\,|\,H\psi_0\rag$ is the conserved energy. For the McLachlan principle, $E_0$ may be
time-dependent and is related to the gauge factor via
\be\label{mc}
E_0(t) = c_1(t).
\ee
\end{theorem}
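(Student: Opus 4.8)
The plan is to substitute the product ansatz $u(t)=\varphi_1(t)\cdots\varphi_N(t)$ into the two orthogonality conditions and to read off equations for the single-particle functions with the help of the tangent-space description in Lemma~\ref{Tud}. Writing $\dot u=\sum_{n=1}^N\dot\varphi_n\psi_n$ and testing against a direction carried by the $n$-th slot, $v=w_n\psi_n$ (all other slots zero, which is still an admissible tangent vector by Lemma~\ref{Tud}), the two relevant pairings are
\[
\lag w_n\psi_n\,|\,\dot u\rag=\lag w_n\,|\,\dot\varphi_n\rag+\lag w_n\,|\,\varphi_n\rag\sum_{m\neq n}\lag\varphi_m\,|\,\dot\varphi_m\rag,\qquad \lag w_n\psi_n\,|\,Hu\rag=\lag w_n\,|\,H_n\varphi_n\rag,
\]
the second identity being $\lag\psi_n\,|\,Hu\rag=H_n\varphi_n$, which follows from $H=\sum_m h_m+V$ and the definition of the mean-field operator. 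I shall fix the phase gauge so that $\lag\varphi_n\,|\,\dot\varphi_n\rag=0$ for $n\ge 2$, matching the gauge-form components of Lemma~\ref{Tud}; this makes the sum $\sum_{m\neq n}\lag\varphi_m\,|\,\dot\varphi_m\rag$ vanish in the $n=1$ pairing and removes the analogous constant in the $n\ge 2$ equations. Normalisation keeps each $\lag\varphi_n\,|\,\dot\varphi_n\rag$ purely imaginary, and I will repeatedly use $\lag\varphi_n\,|\,H_n\varphi_n\rag=\lag u\,|\,Hu\rag=E_0$.

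For the slots $n\ge 2$ the admissible $w_n$ fill the complex-linear space $\{w_n:\lag w_n\,|\,\varphi_n\rag=0\}$, so vanishing of the real part of a pairing over this space already forces the full pairing to vanish (apply it to $w_n$ and to $iw_n$). Hence the residual $\dot\varphi_n-\tfrac{1}{i\hbar}H_n\varphi_n$ must be a multiple of $\varphi_n$; pairing with $\varphi_n$, using the gauge $\lag\varphi_n\,|\,\dot\varphi_n\rag=0$ and $\lag\varphi_n\,|\,H_n\varphi_n\rag=E_0$, gives $i\hbar\dot\varphi_n=(H_n-E_0)\varphi_n$. This computation is the same for both principles, since for $n\ge 2$ the metric condition \eqref{metric} and the symplectic condition \eqref{symplectic_real} reduce to complex orthogonality against the identical subspace.

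The decisive slot is $n=1$, where Lemma~\ref{Tud} only grants the real gauge condition $\mathrm{Re}\lag w_1\,|\,\varphi_1\rag=0$, so the admissible $w_1$ form a real hyperplane and vanishing of $\mathrm{Re}\lag w_1\,|\,\cdot\rag$ over it forces only a \emph{real} multiple of $\varphi_1$. For McLachlan this means $\dot\varphi_1-\tfrac{1}{i\hbar}H_1\varphi_1=r\varphi_1$ with $r\in\R$, i.e. $i\hbar\dot\varphi_1=(H_1+i\hbar r)\varphi_1$; pairing with $\varphi_1$ gives $c_1=E_0+i\hbar r$, and as $c_1-E_0$ is real while $i\hbar r$ is imaginary both must vanish, yielding \eqref{mc}, $c_1=E_0$. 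For Kramer--Saraceno the residual is $i\hbar\dot\varphi_1-H_1\varphi_1=s\varphi_1$ with $s\in\R$ from the start, so pairing with $\varphi_1$ only yields $s=c_1-E_0$ and leaves $c_1$ as a genuine gauge factor. In either case the $n=1$ equation is $i\hbar\dot\varphi_1=(H_1+c_1-E_0)\varphi_1$, which together with the $n\ge 2$ equations is exactly \eqref{eqH}. The energy statement then splits: for Kramer--Saraceno, Lemma~\ref{lem:cons_sympl}(2) makes $E_0=\lag\psi_0\,|\,H\psi_0\rag$ constant, whereas for McLachlan energy is not guaranteed and \eqref{mc} identifies the possibly time-dependent $E_0$ with $c_1$.

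The step I expect to be the main obstacle is the bookkeeping at $n=1$: extracting the $\R\varphi_1$-component from the real-linear test space and tracking how the placement of the factor $i\hbar$ turns one and the same real-orthogonality statement into a reality obstruction (forcing $c_1=E_0$) for McLachlan but into an unconstrained real gauge constant for Kramer--Saraceno. By comparison, the mean-field identity $\lag\psi_n\,|\,Hu\rag=H_n\varphi_n$ and the gauge elimination of the cross terms are routine, but they must be in place before the $n=1$ argument can close.
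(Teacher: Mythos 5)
Your proof is correct and takes essentially the same route as the paper's: the gauged representation $\dot u=\sum_n\dot\varphi_n\psi_n$ from Lemma~\ref{Tud}, single-slot test functions $v=w_n\psi_n$, full complex orthogonality on the complex-linear admissible space for $n\ge 2$ versus a real multiple of $\varphi_1$ at $n=1$, and Lemma~\ref{lem:cons_sympl} for the conserved Kramer--Saraceno energy (note only that $\lag \psi_n\,|\,Hu\rag=H_n\varphi_n$ needs nothing beyond the definition of the mean-field operator --- the splitting $H=\sum_m h_m+V$ you cite is not used in Theorem~\ref{th31}, which requires only self-adjointness). The one local deviation is at \eqref{mc}: you obtain $c_1=E_0$ directly by writing $c_1=E_0+i\hbar r$ and separating real and imaginary parts (normalisation making $c_1$ real), whereas the paper first shows the McLachlan residual constant $\kappa_1$ vanishes and then deduces $E_0=c_1$ from the identity \eqref{eq:cond} of Corollary~\ref{coro}; the two computations are equivalent, yours being slightly more self-contained.
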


\begin{proof} Firstly we write $\dot{u}$ as the unique representation $\dot{u}=\sum\limits_{m=1}^N\dot{\varphi}_m\psi_m$, where the time-derivatives of the single-particle functions satisfy the gauge conditions \eqref{eq:gauge}. We next fix $n$ and consider a tangent function $v=v_n\psi_n\in\calT_u\calM$ with $v_n\in L^2(\mathbb{R}^3)$ satisfying \eqref{eq:gauge} and calculate
\begin{align*}
\mathrm{Re}\lag v\,|\,i\hbar\dot{u}-Hu\rag&=
\mathrm{Re}\,\langle v_n\psi_n\,|\,i\hbar\sum\limits_{m=1}^N \dot{\varphi}_m\psi_m-H(\varphi_n\psi_n)\,\rangle\\
&=\mathrm{Re}\,\langle v_n\,|\,i\hbar\dot{\varphi}_n-H_n \varphi_n\rangle+
\mathrm{Re}\,(\lag v_n\,|\,\varphi_n\rag\sum\limits_{m\neq n} i\hbar\lag \varphi_m\,|\,\dot\varphi_m\rag)\\
&=\mathrm{Re}\,\langle v_n\,|\,i\hbar\dot{\varphi}_n-H_n \varphi_n\rangle
\end{align*}
where we have used the definition of the mean-field Hamiltonian $H_n$ and the gauge conditions,
which imply that $\lag v_n\,|\,\varphi_n\rag \lag \varphi_m\,|\,\dot\varphi_m\rag = 0$ for all $m\neq n$.
We first consider the case $n=1$. The Kramer--Saraceno orthogonality condition implies that
$\mathrm{Re}\,\langle v_1\,|\,i\hbar\dot{\varphi}_1-H_1 \varphi_1\rangle = 0$ for all $v_1\in L^2(\R^3)$ with
$\mathrm{Re}\lag v_1\,|\,\varphi_1\rag = 0$. Therefore, there exists $\kappa_1\in\R$ with
$i\hbar\dot{\varphi}_1-H_1 \varphi_1 = \kappa_1\varphi_1$. For the case $n\ge 2$, we have
$\mathrm{Re}\,\langle v_n\,|\,i\hbar\dot{\varphi}_n-H_n \varphi_n\rangle = 0$ for all $v_n\in L^2(\R^3)$ with
$\lag v_n\,|\,\varphi_n\rag = 0$. We write $i\hbar\dot{\varphi}_n-H_n \varphi_n =
\kappa_n\varphi_n + \tilde\varphi_n$ with $\kappa_n\in\C$ and $\lag \tilde\varphi_n\,|\,\varphi_n\rag = 0$.
Then,
\[
0 = \mathrm{Re}\,\langle \tilde\varphi_n\,|\,i\hbar\dot{\varphi}_n-H_n \varphi_n\rangle =
\mathrm{Re}\,\langle \tilde\varphi_n\,|\,\kappa_n\varphi_n + \tilde\varphi_n\rangle = \|\tilde\varphi_n\|^2
\]
and hence $\tilde\varphi_n=0$. In summary, for all $n\ge 1$ there exist constants $\kappa_n\in\C$ with
$i\hbar\dot{\varphi}_n-H_n \varphi_n = \kappa_n\varphi_n$.
The normalisation of the single particle functions implies that
\begin{align*}
\kappa_n&=\lag\varphi_n\,|\,\kappa_n\varphi_n\rag=
\lag\varphi_n\,|\,i\hbar \dot{\varphi}_n-H_n \varphi_n\rag=i\hbar\lag\varphi_n\,|\,\dot\varphi_n\rag-\lag u\,|\,Hu\rag =c_1\delta_{1n}-E_0,
\end{align*}
where $E_0=\lag u\,|\, Hu\rag=\lag \psi_0\,|\,H\psi_0\rag$ is the conserved energy. Thus \eqref{eqH} follows.

Now we turn to the McLachlan principle and write its orthogonality condition as
\begin{align}
0&=\mathrm{Re}\langle v\,|\,\dot{u}-\frac{1}{i\hbar}Hu\rangle=
\mathrm{Re}\,\langle v_n\psi_n\,|\,\sum\limits_{m=1}^N \dot{\varphi}_m\psi_m-\frac{1}{i\hbar}H(\varphi_n\psi_n)\,\rangle\nn\\
&=\mathrm{Re}\,\langle v_n\,|\,\dot{\varphi}_n-\frac{1}{i\hbar}H_n \varphi_n \rangle
+
\mathrm{Re}\,(\lag v_n\,|\,\varphi_n\rag\sum\limits_{m\neq n}\lag \varphi_m\,|\,\dot\varphi_m\rag)\nn\\
&=\mathrm{Re}\,\langle v_n\,|\,\dot{\varphi}_n-\frac{1}{i\hbar}H_n \varphi_n \rangle.\label{tmp1}
\end{align}
Hence, the orthogonality and the gauge conditions imply the existence of $\kappa_n\in\C$ with $\dot{\varphi}_n-\frac{1}{i\hbar}H_n \varphi_n =
\kappa_n\varphi_n$. Moreover, $\kappa_1\in\R$. We then obtain that
\begin{align*}
\kappa_1&=\mathrm{Re}\lag\varphi_1\,|\,\kappa_1\varphi_1\rag=\mathrm{Re}
\,\langle\varphi_1\,|\,\dot{\varphi}_1-\frac{1}{i\hbar}H_1 \varphi_1\rangle=0,\\
\kappa_n&=\lag\varphi_n\,|\,\kappa_n\varphi_n\rag=
\,\langle\varphi_n\,|\,\dot{\varphi}_n-\frac{1}{i\hbar}H_n \varphi_n\rangle=-\frac{1}{i\hbar}E_0,\quad n\ge 2.
\end{align*}
In view of Corollary \ref{coro}, in particular \eqref{eq:cond}, the McLachlan solution
has a time-dependent total energy satisfying
\[E_0=\lag u\,|\, Hu\rag=\lag u\,|\, i\hbar \dot{u}\rag=
\sum\limits_{m=1}^N\lag \varphi_m\psi_m\,|\,i\hbar \dot{\varphi}_m\psi_m\rag=c_1.
\]
We may therefore write $\kappa_n = \frac{1}{i\hbar}(c_1\delta_{1n}-E_0)$.
Thus for both principles, we have a unified form for each single-particle function
\[
i\hbar\dot{\varphi}_n=
(H_n+c_1\delta_{1n}-E_0)\varphi_n,
\]
which completes the proof.
\end{proof}

Due to the normalization constraint in the definition of the manifold $\calM$, both Hartree solutions $u(x,t) = \varphi_1(x_1,t)\cdots\varphi_N(x_N,t)\in\calM$, the McLachlan and the Kramer--Saraceno one, automatically satisfy
\[
i\hbar\,\langle u(t)\,|\,\dot u(t)\rangle = \sum_{n=1}^N i\hbar\, \langle\varphi_n(t)\,|\,\dot\varphi_n(t)\rangle
=c_1(t).
 \]
The gauge relation \eqref{mc}, that holds for the McLachlan solution, therefore yields
\[
i\hbar \,\langle u(t)\,|\,\dot u(t) \rangle = \lag u(t)\,|\,Hu(t)\rag,
\]
which is a variational analogue of the gauge property $i\hbar \,\langle\psi(t)\,|\,\dot\psi(t) \rangle = \lag \psi(t)\,|\,H\psi(t)\rag$,
that is satisfied by the solution $\psi(t)$ of the Schr\"odinger equation~\eqref{schro}.

\begin{remark}
If the Hartree manifold is defined without normalization constraint, but with a multiplicative
scalar factor, as
\[
\Big\{u\in L^2(\mathbb{R}^{3N})\mid\, u(x)=a\prod\limits_{n=1}^N\varphi_n(x_n)\neq0,\,\,\, a\in\C,\,\,\,
\varphi_n\in L^2(\mathbb{R}^3)\Big\},
\]
then one has complex-linear tangent spaces and thus naturally enters the setting for the Dirac--Frenkel
principle. The equations of motion are similar to the ones of Theorem~\ref{th31}. The complex-valued
factor $a(t)$ may absorb various gauge contributions, see \cite[Theorem~3.1]{Lu}.
\end{remark}

\subsection{The variational Hamiltonian}
Theorem~\ref{th31} and its equations of motion only use the self-adjointness of the Hamiltonian $H$. If the Hamiltonian allows for an additive splitting as in \eqref{Ham}, then
the $n$-th mean-field Hamiltonian $H_n = \lag \psi_n\,|\, H\psi_n\rag$ takes the specific form
\be\label{Hn}
H_n= \lag \psi_n\,|\,h_n\psi_n\rag + \sum_{m\neq n} \lag \psi_n\,|\,h_m\psi_n\rag + \lag\psi_n\,|\,V\psi_n\rag
= h_n + \sum_{m\neq n} \epsilon_m + V_n,
\ee
where $V_n:=V_n(x_n)=\lag \psi_n\,|\,V\psi_n\rag$
is the $n$-th mean-field potential and
\[
\epsilon_m=\lag\varphi_m\,|\,h_m\varphi_m\rag
\]
the $m$-th average one-particle energy. If the single-particle Hamiltonian $h_m$ is the Laplacian $-\frac{\hbar^2}{2M_m}\Delta_{x_m}$, then
\[
\epsilon_m=\frac{\hbar^2}{2M_m}\|\nabla_{x_m}\varphi_m\|^2.
\]
The sum of the mean-field Hamiltonians result in the variational Hamiltonian, that can be related to the
original one in terms of the zero-mean fluctuating potential.

\smallskip
\begin{theorem}[Variational Hamiltonian]\label{varHam}
If the Hamiltonian $H$ is of the form \eqref{Ham}, then the McLachlan variational solution given in Theorem~\ref{th31} satisfies a non-linear evolution equation
$i\hbar \dot{u}(t)=H_{u(t)} u(t)$ with right handside
\[
H_{u(t)}=\sum_{n=1}^N H_n(t)-(N-1)E_0.
\]
In particular, the difference of the true and the variational Hamiltonian is given by the
zero-mean fluctuating potential, $H-H_{u(t)} = \widetilde V(t)$, with
\[
\widetilde V(t) = V - \sum_{n=1}^N V_n(t) + (N-1)V_0(t),
\]
and $V_0(t) = \lag u(t)\,|\, Vu(t)\rag$ the potential energy with respect to $u(t)$.
\end{theorem}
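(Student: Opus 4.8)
The plan is to start from the equations of motion established in Theorem~\ref{th31} and assemble them into an evolution equation for the full product $u$. For the McLachlan solution the gauge relation~\eqref{mc} gives $c_1(t) = E_0(t)$, so each single-particle function obeys $i\hbar\dot\varphi_n = (H_n + E_0\delta_{1n} - E_0)\varphi_n$. Using the representation $\dot u = \sum_{n=1}^N\dot\varphi_n\psi_n$ from the proof of Lemma~\ref{Tud}, I would write
\[
i\hbar\dot u = \sum_{n=1}^N\big(i\hbar\dot\varphi_n\big)\psi_n = \sum_{n=1}^N\big(H_n\varphi_n\big)\psi_n + \sum_{n=1}^N\big(E_0\delta_{1n}-E_0\big)u,
\]
where I used $\varphi_n\psi_n = u$ for the scalar contributions. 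The key structural observation is that $H_n$ acts only on the variable $x_n$, so that $H_n u = (H_n\varphi_n)\psi_n$ and hence $\sum_n(H_n\varphi_n)\psi_n = \big(\sum_n H_n\big)u$. Since $\sum_{n=1}^N(E_0\delta_{1n}-E_0) = E_0 - NE_0 = -(N-1)E_0$, this already yields $i\hbar\dot u = H_{u}u$ with $H_{u} = \sum_n H_n - (N-1)E_0$, which is the first claim.

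For the second claim I would substitute the explicit form~\eqref{Hn}, $H_n = h_n + \sum_{m\neq n}\epsilon_m + V_n$, and collect terms after summing over $n$. The single-particle pieces give $\sum_n h_n = H - V$ by~\eqref{Ham}, the potential pieces give $\sum_n V_n$, and the average-energy pieces give a double sum $\sum_n\sum_{m\neq n}\epsilon_m$. The one genuinely non-trivial bookkeeping step is to notice that each $\epsilon_m$ is counted once for every $n\neq m$, i.e.\ exactly $N-1$ times, so that the double sum collapses to $(N-1)\sum_m\epsilon_m$.

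To finish, I would eliminate the average one-particle energies in favour of $E_0$ and $V_0$. Expanding $E_0 = \lag u\,|\,Hu\rag$ via~\eqref{Ham} together with $\lag u\,|\,h_n u\rag = \epsilon_n$ and $\lag u\,|\,Vu\rag = V_0$ gives $E_0 = \sum_m\epsilon_m + V_0$, hence $\sum_m\epsilon_m = E_0 - V_0$. Inserting this into $\sum_n H_n = (H-V) + (N-1)(E_0 - V_0) + \sum_n V_n$ and subtracting $(N-1)E_0$, the $E_0$ contributions cancel and I am left with $H_{u} = H - V + \sum_n V_n - (N-1)V_0$, so that $H - H_{u} = V - \sum_n V_n + (N-1)V_0 = \widetilde V$, as claimed. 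The naming is justified by a short check that $\lag u\,|\,\widetilde V u\rag = V_0 - NV_0 + (N-1)V_0 = 0$, using $\lag u\,|\,V_n u\rag = V_0$. I do not expect a real obstacle here: the only care needed is the combinatorial counting of the $\epsilon_m$ terms and the sign bookkeeping in the two cancellations, both of which are routine.
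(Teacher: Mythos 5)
Your proposal is correct and follows essentially the same route as the paper's proof: assemble the mean-field equations of Theorem~\ref{th31} with the McLachlan gauge relation $c_1=E_0$ from \eqref{mc}, substitute \eqref{Hn}, count each $\epsilon_m$ exactly $N-1$ times in the double sum, and eliminate it via $\sum_m\epsilon_m=E_0-V_0$. Your explicit assembly step $i\hbar\dot u=\sum_n(i\hbar\dot\varphi_n)\psi_n$ with $H_nu=(H_n\varphi_n)\psi_n$, and your closing zero-mean check, are details the paper leaves implicit or records separately (the latter as \eqref{V0u}), but the argument is the same.
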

\begin{proof}
We write the total energy as
\[
E_0=\lag u\,|\,Hu\rag=\sum_{n=1}^N\lag u\,|\, h_nu\rag + \lag u\,|\,Vu\rag = \sum_{n=1}^N\epsilon_n+V_0.
\]
Therefore, by the summation property \eqref{mc} of the gauge factors of the McLachlan principle, we get
\begin{align*}
H_{u}&=\sum_{n=1}^N \left(H_n+E_0(\delta_{1n}-1)\right) =\sum_{n=1}^N H_n-(N-1)E_0,\\
&= \sum_{n=1}^N(h_n+V_n) + (N-1)\big(\sum_{n=1}^N\epsilon_n - E_0\big)\\
&= \sum_{n=1}^N(h_n+V_n) - (N-1)V_0 = H -\widetilde V,
\end{align*}
which completes the proof.
\end{proof}

We note the zero-mean fluctuating potential indeed has zero mean with respect to the variational solution,
\be\label{V0u}
\big\langle u(t)\,|\,\widetilde{V}(t)u(t)\big\rangle = 0,
\ee
since
\begin{align*}
\big\langle u(t)\,|\,\widetilde{V}(t)u(t)\big\rangle &=
\lag u(t)\,|\,Vu(t)\rag+(N-1)V_0(t)-\sum\limits_{n=1}^N\lag u(t)\,|\, V_n(t)u(t)\rag\nn\\
&=NV_0(t)-\sum\limits_{n=1}^N\lag \varphi_n(t)\,|\,V_n(t)\varphi_n(t)\rag=0.
\end{align*}

\begin{remark}
Consider the special Hamiltonian $H = \sum\limits_{n=1}^N h_n + V$, where the potential $V$ separates in the sense that
\[
V(x) = \sum_{m=1}^N V^{(m)}(x_m)
\]
with each $V^{(m)}:\R^{3}\to\R$ a single particle potential. Then,
the mean field potentials satisfy
\begin{align*}
V_n(t) &= \sum_{m=1}^N \langle \psi_n(t)\,|\,V^{(m)}\psi_n(t)\rangle= V^{(n)} + \sum_{m\neq n} \langle \varphi_m(t)\,|\,V^{(m)}\varphi_m(t)\rangle\\
&= V^{(n)} - \langle \varphi_n(t)\,|\,V^{(n)}\varphi_n(t)\rangle + V_0(t)
\quad\text{for all}\ n=1,\ldots,N,
\end{align*}
and we therefore have $\widetilde V(t) = 0$.
\end{remark}

The zero-mean fluctuating potential $\widetilde V$ describes the local minimum distance that governs the a posteriori error, see also the discussion by R. Martinazzo and I. Burghardt in \cite{MB}.

\begin{corollary}\label{cor:zmfp}
For the Hamiltonian $H$ given in \eqref{Ham}, the McLachlan variational approximation $u(t)\in\calM$ given in Theorem \ref{th31} satisfies
\[
\varepsilon_\calM(u)=\frac{1}{\hbar}\|\widetilde{V}u\|=\frac{1}{\hbar}\ \Big(
\|Vu\|^2-\sum\limits_{n=1}^N\|V_n\varphi_n\|^2+(N-1)V_0^2\Big)^{1/2}.\]
\end{corollary}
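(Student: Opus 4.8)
The plan is to combine the two characterizations of $\varepsilon_\calM$ already at hand and then carry out an explicit computation of $\|\widetilde V u\|$. The first equality is essentially free: by the definition \eqref{unif} we have $\varepsilon_\calM(u) = \frac{1}{\hbar}\|H_u u - Hu\|$, and Theorem~\ref{varHam} identifies $H - H_u = \widetilde V$, so that $H_u u - Hu = -\widetilde V u$ and hence $\varepsilon_\calM(u) = \frac{1}{\hbar}\|\widetilde V u\|$. All the work is therefore in the second equality.

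For that I would compute $\|\widetilde V u\|^2$ directly. Since the full potential $V$, the mean-field potentials $V_n = V_n(x_n)$, and the potential energy $V_0$ are all real, the variational correction $\widetilde V = V - \sum_{n=1}^N V_n + (N-1)V_0$ acts as multiplication by a real function, so that $\|\widetilde V u\|^2 = \langle u\,|\,\widetilde V^2 u\rangle$. I would expand the square of $\widetilde V$ and evaluate each resulting expectation value $\langle u\,|\,\cdot\,u\rangle$, using throughout that $u = \prod_n\varphi_n$ with $\|\varphi_n\|=1$, hence $\|u\|=1$.

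The evaluation rests on three elementary identities coming from the product structure of $u$ and the defining relation $V_n(x_n) = \langle \psi_n\,|\,V\psi_n\rangle$. First, the mean-field expectation identity $\langle u\,|\,Vu\rangle = \langle\varphi_n\,|\,V_n\varphi_n\rangle$ gives $\langle\varphi_n\,|\,V_n\varphi_n\rangle = V_0$, whence $\langle u\,|\,V_n u\rangle = V_0$ for every $n$. Second, for $n\neq m$ the two factors depend on different variables, so $\langle u\,|\,V_n V_m u\rangle = \langle\varphi_n\,|\,V_n\varphi_n\rangle\langle\varphi_m\,|\,V_m\varphi_m\rangle = V_0^2$, while the diagonal term is $\langle u\,|\,V_n^2 u\rangle = \|V_n\varphi_n\|^2$. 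Third, and crucially, integrating first over all variables except $x_n$ turns the defining relation for $V_n$ into $\langle u\,|\,V V_n u\rangle = \|V_n\varphi_n\|^2$. Collecting the contributions of the expanded square, the coefficients of $\sum_n\|V_n\varphi_n\|^2$ combine to $-1$ and those of $V_0^2$ combine to $N(N-1) + (N-1)^2 + 2(N-1) - 2N(N-1) = N-1$, which yields $\|\widetilde V u\|^2 = \|Vu\|^2 - \sum_n\|V_n\varphi_n\|^2 + (N-1)V_0^2$ and hence the stated formula.

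I expect the only genuine obstacle to be the third identity $\langle u\,|\,V V_n u\rangle = \|V_n\varphi_n\|^2$: it is exactly here that the definition of $V_n$ as the partial expectation of $V$ over the single-hole variables produces the needed cancellation, reflecting the self-consistency of the mean field. Everything else is careful bookkeeping, in particular of the several $V_0^2$ contributions; one could alternatively shorten the argument by noting that $\sum_n V_n - (N-1)V_0$ is the first-order (conditional-expectation) part of $V$ in the product measure $|u|^2\,\mathrm{d}x$, so that $\widetilde V$ is $\langle\cdot\,|\,\cdot\rangle$-orthogonal to its complement, but the direct expansion is the most transparent route.
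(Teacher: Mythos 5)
Your proposal is correct and follows essentially the same route as the paper: the first equality via the definition \eqref{unif} together with Theorem~\ref{varHam}, and the second via a direct expansion of $\|\widetilde V u\|^2$ resting on exactly the three product-structure identities the paper uses, namely $\langle u\,|\,V_n u\rangle = V_0$, $\langle V_n u\,|\,V_m u\rangle = V_0^2$ for $n\neq m$ (resp.\ $\|V_n\varphi_n\|^2$ for $n=m$), and the self-consistency relation $\langle Vu\,|\,V_n u\rangle = \|V_n\varphi_n\|^2$. The only cosmetic difference is that the paper first invokes the zero-mean property $\langle u\,|\,\widetilde V u\rangle = 0$ to drop the $(N-1)V_0$ term from one factor before expanding, whereas you expand $\widetilde V^2$ in full; your coefficient bookkeeping, $N(N-1)+(N-1)^2+2(N-1)-2N(N-1)=N-1$, checks out.
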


\begin{proof}
By the definition of $\varepsilon_\calM(u)$ in \eqref{unif} and Theorem~\ref{varHam},
\[
\varepsilon_\calM(u) = \frac{1}{\hbar}\|H_uu - Hu\| = \frac{1}{\hbar}\|\widetilde Vu\|.
\]
It remains to calculate the norm as
\begin{align*}
&\quad\|\widetilde{V}u\|^2 =\Big\langle \big(V-\sum_{n=1}^N V_n\big)u\,|\,\widetilde{V}u\Big\rangle\\
&=\Big\|\big(V-\sum_{n=1}^N V_n\big)u\Big\|^2+(N-1)V_0\Big\langle \big(V-\sum_{n=1}^N V_n\big)u\,|\,u\Big\rangle\\
&=\|Vu\|^2-2\sum_{n=1}^N\lag Vu\,|\,V_nu\rag+\Big\|\sum_{n=1}^NV_nu\Big\|^2+(N-1)V_0^2\\
&\quad-(N-1)V_0\sum_{n=1}^N\lag V_nu\,|\,u\rag\\
&=\|Vu\|^2-\sum_{n=1}^N\|V_n\varphi_n\|^2+\sum_{n=1}^N
\sum_{m\neq n}\langle V_nu\,|\,V_mu\rangle-(N-1)^2V_0^2\\
&=\|Vu\|^2-\sum_{n=1}^N\|V_n\varphi_n\|^2+(N-1)V_0^2,
\end{align*}
where we have applied the formulae
\begin{align*}
\lag V_nu\,|\, u\rag &= \lag V_n\varphi_n\,|\, \varphi_n\rag = \lag Vu\,|\, u\rag = V_0,\\
\lag Vu\,|\, V_nu\rag &= \lag V\varphi_n\psi_n\,|\, V_n\varphi_n\psi_n\rag =
\lag V_n\varphi_n\,|\, V_n\varphi_n\rag = \|V_n\varphi_n\|^2,
\end{align*}
and
\[
\lag V_nu\,|\, V_mu\rag =
\left\{ \begin{array}{ll}
\lag V_n\varphi_n\psi_n\,|\, V_n\varphi_n\psi_n\rag=\|V_n\varphi_n\|^2, & m=n,\\*[1ex]
\lag V_n\varphi_n\psi_n\,|\, V_m\varphi_m\psi_m\rag
= V_0^2, & m\neq n.
\end{array}\right.
\]
\end{proof}

\section{Frozen Gaussian wave-packet dynamics}
\label{secfro}

\subsection{The frozen Gaussians' manifold}
For a small positive parameter $\delta>0$, we consider the Gaussian function
\[
\varphi_0(x)=(2\pi\delta^2)^{-d/4}\exp\!{\left(-\frac{|x|^2}{4\delta^2}\right)},\quad x\in\R^d.
\]
The placement of the parameter $\delta$ ensures that $|\varphi_0|^2$ is a probability density with
mean zero and covariance matrix $\delta^2\Id$,
\[
\|\varphi_0\| = 1,\quad \lag \varphi_0\,|\, x\varphi_0\rag = 0,\quad
\lag x_m\varphi_0\,|\,x_n\varphi_0\rag = \delta^2\delta_{mn}.
\]
We define the corresponding $\delta$-dependant \emph{ladder operators} as
\begin{align*}
&A=(A_1, \ldots, A_d)=\frac{\widehat{q}}{2\delta}+\frac{i\delta\widehat{p}}{\hbar} =
\frac{x}{2\delta}+ \delta \nabla,\\
&A^\dagger=(A^\dagger_1, \ldots, A^\dagger_d)=
\frac{\widehat{q}}{2\delta}-\frac{i\delta\widehat{p}}{\hbar} =
\frac{x}{2\delta} - \delta \nabla,
\end{align*}
where $\widehat{q}$ and $\widehat{p}$ are the position and momentum operators given by
$(\widehat{q}\varphi)(x)=x\varphi(x)$ and $(\widehat{p}\varphi)(x)=-i\hbar\nabla\varphi(x)$.
We observe that that the lowering operator~$A$ and the raising operator~$A^\dagger$ are (formally) adjoint to each other and its components fulfill the canonical communication relations,
\[
[A_k,A_\ell^\dagger] = [\partial_k,x_\ell] = \delta_{k,\ell}\quad\text{for all}\ k,\ell=1,\ldots,d.
\]
Moreover, $A$ annihilates the Gaussian in the sense that, $A\varphi_0 = 0$.
Following the Bargman formalism \cite{Barg}, we use the raising operator $A^\dagger$ to translate the
Gaussian in phase space. We denote
\[
z\cdot w = z_1w_1+ \cdots + z_dw_d
\]
for any complex vectors  $z,w\in\C^d$, but also for the vector of raising operators such that
$z\cdot A^\dagger = z_1A^\dagger_1 + \cdots + z_dA^\dagger_d$.

\begin{lemma}[Gaussian wave-packet]\label{upro} Let $\delta>0$. For $z=q+ip\in \mathbb{C}^d$ with $q,p\in\R^d$ we define
\[
u(x)=\exp\! \left(-|z|^2/2+z\cdot A^\dagger\right)\varphi_0(x).
\]
Then,
\[
u(x)=(2\pi\delta^2)^{-d/4}\exp\!{\Big(-\frac{1}{4\delta^2}|x-2\delta q|^2+\frac{i}{\delta} p \cdot (x-\delta q)\Big)}.
\]
In particular,
\begin{align}
&\|u\|=1,\quad \lag u\,|\,\widehat q u\rag=2\delta q,\quad \lag u\,|\,\widehat p u\rag=\hbar/\delta\, p,\\*[1ex]
&\lag x_mu\,|\,x_nu\rag=\delta^2(\delta_{mn}+4q_m q_n),\quad
\|xu\|^2=\delta^2(d+4|q|^2),\label{upro1}\\*[1ex]
& Au=zu,\,\,\,A^\dagger u=(x/\delta-z)u.\label{upro2}
\end{align}
\end{lemma}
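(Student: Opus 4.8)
The plan is to reduce everything to the action of the displacement operator $D(z) = \exp(z\cdot A^\dagger - \bar z\cdot A)$ and then read off the closed form and the moments. First I would note that $[z\cdot A^\dagger, -\bar z\cdot A] = |z|^2$ is a scalar by the canonical commutation relations, so the BCH identity for a central commutator gives $D(z) = \e^{z\cdot A^\dagger}\e^{-\bar z\cdot A}\e^{-|z|^2/2}$. Since $A\varphi_0 = 0$, the factor $\e^{-\bar z\cdot A}$ fixes $\varphi_0$, and hence $u = \e^{-|z|^2/2}\e^{z\cdot A^\dagger}\varphi_0 = D(z)\varphi_0$. Because $z\cdot A^\dagger - \bar z\cdot A$ is anti-self-adjoint, $D(z)$ is unitary, which already yields $\|u\| = \|\varphi_0\| = 1$.

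To produce the explicit Gaussian I would disentangle $\e^{z\cdot A^\dagger}$ a second time, now splitting $z\cdot A^\dagger = B + C$ into the multiplication operator $B = z\cdot x/(2\delta)$ and the transport operator $C = -\delta\,z\cdot\nabla$. A short computation gives the central commutator $[B,C] = \tfrac12\,z\cdot z$, so BCH yields $\e^{z\cdot A^\dagger} = \e^{B}\e^{C}\e^{-\frac14\,z\cdot z}$. Here $\e^{C}\varphi_0(x) = \varphi_0(x - \delta z)$ is a complex shift, legitimate because $\varphi_0$ is entire and the ensuing integrals converge. Inserting the Gaussian, multiplying by $\e^{B}$, and collecting the resulting $x$-independent terms leaves the exponent $-|x|^2/(4\delta^2) + z\cdot x/\delta - |z|^2/2 - z\cdot z/2$. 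The decisive algebraic check is that, writing $z = q + ip$ and using $z\cdot z = |q|^2 - |p|^2 + 2i\,q\cdot p$, one has $-|z|^2/2 - z\cdot z/2 = -|q|^2 - i\,q\cdot p$; completing the square then reproduces exactly the claimed exponent $-|x - 2\delta q|^2/(4\delta^2) + \tfrac{i}{\delta}p\cdot(x-\delta q)$.

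For the ladder identities \eqref{upro2} I would avoid differentiation and use the intertwining relation $A\,D(z) = D(z)(A + z)$, a direct consequence of the commutation relations. Together with $A\varphi_0 = 0$ this gives $Au = D(z)(A + z)\varphi_0 = z\,D(z)\varphi_0 = zu$. Since $A + A^\dagger = x/\delta$ by the very definition of the ladder operators, it follows that $A^\dagger u = (x/\delta)u - Au = (x/\delta - z)u$.

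Finally, the expectation values all reduce to these two identities. Writing $\wh q = \delta(A + A^\dagger)$ and $\wh p = -\tfrac{i\hbar}{2\delta}(A - A^\dagger)$, and using $\lag u\,|\,Au\rag = z$ and $\lag u\,|\,A^\dagger u\rag = \bar z$, one obtains $\lag u\,|\,\wh q u\rag = 2\delta q$ and $\lag u\,|\,\wh p u\rag = \hbar p/\delta$. The second moments $\lag x_m u\,|\,x_n u\rag = \delta^2(\delta_{mn} + 4q_m q_n)$ follow by expanding $x_m x_n = \delta^2(A_m + A_m^\dagger)(A_n + A_n^\dagger)$, pushing lowering operators to the right with $[A_m, A_n^\dagger] = \delta_{mn}$, and using $A_n u = z_n u$; summing the diagonal gives $\|xu\|^2 = \delta^2(d + 4|q|^2)$. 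Alternatively, one may read all moments directly off the density $|u(x)|^2 = (2\pi\delta^2)^{-d/2}\exp(-|x-2\delta q|^2/(2\delta^2))$, whose mean is $2\delta q$ and covariance $\delta^2\Id$. The main obstacle is purely bookkeeping: carrying out the two BCH disentanglings correctly, tracking the scalar contributions to the exponent, and justifying the complex translation $\varphi_0(x - \delta z)$ by analyticity.
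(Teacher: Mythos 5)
Your proof is correct, but it takes a genuinely different route from the paper. The paper obtains the closed form by viewing $u$ as the time-$1$ solution of the Cauchy problem $\partial_t\Psi = \bigl(-\tfrac12|z|^2+z\cdot A^\dagger\bigr)\Psi$, writing down the ansatz $\Psi(t,x)=\exp\bigl(-\tfrac{t}{2}|z|^2-\tfrac{t^2}{4}z\cdot z+\tfrac{t}{2\delta}z\cdot x\bigr)\varphi_0(x-t\delta z)$ and verifying it by direct differentiation; it then gets all moments by explicit Gaussian integration against the density $|u|^2$, and the ladder identities \eqref{upro2} from the explicit formula for $\nabla u$. You instead run the displacement-operator calculus: one BCH disentangling with the central commutator $[z\cdot A^\dagger,-\bar z\cdot A]=|z|^2$ identifies $u=D(z)\varphi_0$ (giving $\|u\|=1$ for free from unitarity, which the paper only gets implicitly from the shifted probability density), a second BCH splitting $z\cdot A^\dagger$ into multiplication and transport parts reproduces exactly the paper's intermediate expression $\exp\bigl(-\tfrac12|z|^2-\tfrac14 z\cdot z+\tfrac{1}{2\delta}z\cdot x\bigr)\varphi_0(x-\delta z)$, the intertwining relation $AD(z)=D(z)(A+z)$ yields $Au=zu$ without differentiating, and normal ordering $\wh q_m\wh q_n=\delta^2(A_m+A_m^\dagger)(A_n+A_n^\dagger)$ with $[A_m,A_n^\dagger]=\delta_{mn}$ replaces the paper's second-moment integrals; all your scalar bookkeeping (e.g.\ $-\tfrac12|z|^2-\tfrac12 z\cdot z=-|q|^2-i\,q\cdot p$) checks out. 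The trade-off: your operator-algebraic argument is shorter, integral-free, and makes the coherent-state structure transparent, but the BCH and intertwining identities involve unbounded operators and are rigorous only when applied to analytic vectors such as $\varphi_0$ (you note this for the complex translation $\varphi_0(x-\delta z)$, and the same caveat applies to the exponential disentanglings), whereas the paper's verify-the-ansatz-by-differentiation strategy sidesteps these functional-analytic issues entirely at the cost of presenting the solution formula without derivation.
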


\begin{proof} We view the function $u$ as the solution at time $t=1$ of the Cauchy problem
\[
\partial_t\Psi = \Big(-\frac12|z|^2 + z\cdot A^\dagger\Big)\Psi,\qquad \Psi(0) = \varphi_0.
\]
We have
\[
\Psi(t,x) = \exp\left(-\frac{t}{2}\,|z|^2-\frac{t^2}{4}\, z\cdot z+\frac{t}{2\delta} \,z\cdot x\right) \varphi_0(x-t \delta z),
\]
as can be verified by direct calculation, that is,
\[
\partial_t\Psi
= \Big( -\frac12|z|^2-\frac{t}{2} z\cdot z + \frac{1}{2\delta}z\cdot x -\delta z\cdot\nabla_x + \frac{t}{2}z\cdot z\Big)\Psi= \Big( -\frac12|z|^2 + z\cdot A^\dagger\Big)\Psi.\]
Therefore,
\[
u(x) = \exp\left(-\frac{1}{2}\,|z|^2-\frac{1}{4}\, z\cdot z+\frac{1}{2\delta} \,z\cdot x\right) \varphi_0(x-\delta z).
\]
Since the quadratic form in this exponential function can be rewritten as
\[
-\frac12|z|^2-\frac14 z\cdot z + \frac{1}{2\delta}z\cdot x
- \frac{1}{4\delta^2}(x-\delta z)\cdot(x-\delta z)= -\frac{1}{4\delta^2}|x-2\delta q|^2 + \frac{i}{\delta}p\cdot (x-\delta q),
\]
thn $u$ indeed has the claimed form. For the position expectation we get
\begin{align*}
\lag u\,|\widehat{q}u\rag&=(2\pi\delta^2)^{-\frac{d}{2}}\int_{\mathbb{R}^d}x\exp\left(
-\frac{1}{2\delta^2}|x-2\delta q|^2\right)\D x\\
&=2\delta q \int_{\mathbb{R}^d} |\varphi_0(x)|^2 \,\D x =2\delta q.\end{align*}
Since the gradient satisfies
\[
\nabla u(x)=u(x)\left(-\frac{x-2\delta q}{2\delta^2}+\frac{ip}{\delta}\right)=u(x)\left(-\frac{1}{2\delta^2}x+\frac{1}{\delta}z\right),
\]
we calculate the position expectation as
\begin{align*}
\lag u\,|\,\widehat p u\rag &= -i\hbar\lag u\,|\,\nabla u\rag
= -i\hbar\left(-\frac{1}{2\delta^2}\lag u\,|\,xu\rag + \frac{1}{\delta}z\lag u\,|\,u\rag\right)\\
&=\frac{-i\hbar}{\delta}\left(-q + z\right) = \frac{\hbar}{\delta}\,p.
\end{align*}
For the second moments we have
\begin{align*}
\lag x_mu\,|\,x_nu\rag&=(2\pi\delta^2)^{-\frac{d}{2}}\int_{\mathbb{R}^d}x_m x_n\exp\left(-\frac{1}{2\delta^2}|x-2\delta q|^2\right)\D x\\
&=\int_{\mathbb{R}^d}(x_m+2\delta q_m)(x_n+2\delta q_n)\,|\varphi_0(x)|^2\,\D x\\
&=4\delta^2q_m q_n+\int_{\mathbb{R}^d} x_mx_n|\varphi_0(x)|^2\, \D x\\*[1ex]
&=4\delta^2q_m q_n+\delta^2\delta_{mn}.
\end{align*}
In particular,
\[
\|xu\|^2=\sum\limits_{n=1}^d\lag x_n u\,|\, x_nu\rag=
\delta^2\sum\limits_{n=1}^d(1+4q_n^2)=\delta^2(d+4|q|^2).
\]
which completes the proof for \eqref{upro1}. Using the formula for $\nabla u$
once more, we obtain
\[
Au=\frac{1}{2\delta}xu+\delta\nabla u=zu,\quad
A^\dagger u=\frac{1}{2\delta}xu-\delta\nabla u=\big(\frac{x}{\delta}-z\big)u,
\]
which concludes \eqref{upro2}.
\end{proof}

Now, for a fixed width parameter $\delta>0$, we consider the variational \emph{frozen} Gaussian approximation on the manifold
\begin{align*}
\mathcal{M}&=\left\{u(\theta,q,p)\mid  \theta\in\mathbb{R},\ p, q\in\mathbb{R}^d\right\}\subset L^2(\R^d),
\end{align*}
where $u(\theta,q,p)$ denotes the wave-packet
\be\label{eq:frozen}
u(\theta, q, p)(x) =
(2\pi\delta^2)^{-\frac{d}{4}}\exp\!{\left(i\theta-\frac{1}{4\delta^2}|x-2\delta q|^2+\frac{i}{\delta} p \cdot (x-\delta q)\right)}.
\ee
Note that $\|u(\theta,q,p)\| = 1$ for all choices of the parameters $\theta\in\R$, $q,p\in\R^d$. The tangent spaces for this manifold can easily be calculated as follows:
\begin{lemma}[Tangent space]\label{lem:tang}
For $\forall u\in \calM$, the tangent space $\calT_u\calM$ is given by
\begin{align*}
\calT_u\calM
&=\left\{wu\,|\,w(x)=(a+ib)\cdot x-a\cdot 2\delta q+ic,\quad a,\,b\in\mathbb{R}^d,\,\,\, c\in\mathbb{R}\right\}\\
&= \left\{Bu\mid B = (a+ib)\cdot(A^\dagger - \frac{\overline{z}}{2})-\frac{z}{2}\cdot(a-ib) + ic,\,\, a,b\in\R^d,\, c\in\R\right\}.
\end{align*}
In particular, $\calT_u\calM$ is a real-linear $(2d+1)$-dimensional subspace of the Hilbert space $L^2(\R^d)$, that is not complex-linear, but $iu\in\calT_u\calM$.
\end{lemma}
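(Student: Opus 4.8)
The plan is to realise $\calT_u\calM$ as the real-linear span of the partial derivatives of the parametrisation $(\theta,q,p)\mapsto u(\theta,q,p)$ and then to massage this span into the two stated forms. First I would differentiate the explicit exponential in \eqref{eq:frozen}. Writing $u=u(\theta,q,p)$ and differentiating its logarithm, a direct computation yields
\[
\partial_\theta u = iu,\qquad
\partial_{q_j} u = \Big(\tfrac{1}{\delta}(x_j-2\delta q_j)-ip_j\Big)u,\qquad
\partial_{p_j} u = \tfrac{i}{\delta}(x_j-\delta q_j)\,u,
\]
for $j=1,\dots,d$. A general tangent vector is then the real combination $v=\alpha\,\partial_\theta u+\sum_j\beta_j\,\partial_{q_j}u+\sum_j\gamma_j\,\partial_{p_j}u$ with $\alpha,\beta_j,\gamma_j\in\R$, and I would collect the coefficient of $u$ into a single multiplier $w$, so that $v=wu$.

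The first representation follows by reading off coefficients. Setting $a_j=\beta_j/\delta$ and $b_j=\gamma_j/\delta$, the coefficient of $x_j$ in $w$ becomes $a_j+ib_j$, the real constant assembles to $-2\delta\,a\cdot q=-a\cdot 2\delta q$, and the imaginary constant becomes $c:=\alpha-\delta(a\cdot p+b\cdot q)$. Since $(\alpha,\beta,\gamma)$ range freely over $\R^{2d+1}$, so do $(a,b,c)\in\R^d\times\R^d\times\R$; this establishes $\calT_u\calM=\{wu\mid w(x)=(a+ib)\cdot x-a\cdot 2\delta q+ic\}$. For the ladder-operator form I would invoke \eqref{upro2}, namely $A^\dagger u=(x/\delta-z)u$, to replace multiplication by $x/\delta$ acting on $u$. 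Substituting $x=\delta\,(x/\delta)$ into $w$ and comparing with $B=(a+ib)\cdot(A^\dagger-\tfrac{\overline z}{2})-\tfrac{z}{2}\cdot(a-ib)+ic$ (after expanding $z=q+ip$) produces exactly the same $x$-coefficient and the same real constant $-a\cdot 2\delta q$; the imaginary constants differ only by the gauge shift $\delta(a\cdot p+b\cdot q)$, which is absorbed by the free parameter $c$. Hence both descriptions define the same subset, with the parameter rescaling $a\mapsto\delta a$, $b\mapsto\delta b$ between the two normalisations.

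Finally I would read off the three qualitative claims directly from the first form. The map $(a,b,c)\mapsto w$ is injective, since the linear-in-$x$ part fixes $a+ib$ and hence $a,b$, after which the imaginary constant fixes $c$; and $w\mapsto wu$ is injective because $u$ is nowhere zero. Therefore $\dim_\R\calT_u\calM=2d+1$. Choosing $a=b=0$, $c=1$ gives $w=i$, so $iu\in\calT_u\calM$ (this is just the $\partial_\theta u$ direction). For non-complex-linearity it suffices that $i(iu)=-u\notin\calT_u\calM$: membership would force $w=-1$, but every admissible $w$ with vanishing $x$-coefficient is purely imaginary, so a nonzero real constant is unattainable.

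I expect the only obstacle to be bookkeeping: matching the two representations requires tracking the $\delta$-scaling and the gauge term hidden in the imaginary constant, while the dimension count rests on confirming that the parametrisation by $(a,b,c)$ is faithful rather than on any analytic difficulty. The conceptual point worth isolating is that the phase direction $\partial_\theta u=iu$ is precisely what frees the imaginary constant $c$ and, at the same time, places $iu$ inside the tangent space while keeping $u$ itself outside it.
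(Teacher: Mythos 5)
Your proposal is correct and follows essentially the same route as the paper's proof: differentiate the explicit parametrisation $(\theta,q,p)\mapsto u(\theta,q,p)$, collect the resulting multiplier into a linear polynomial $w$ with free imaginary constant, pass to the ladder-operator form via $A^\dagger u=(x/\delta-z)u$, and read off $iu\in\calT_u\calM$ (from $a=b=0$, $c=1$) and the failure of complex linearity from the restriction on the real part of the constant term. You merely make explicit what the paper leaves as ``an analogous calculation''---the $\delta$-rescaling and gauge shift absorbed by $c$, and the injectivity argument behind the $(2d+1)$-dimension count---so the two arguments agree in substance.
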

\begin{proof} By definition, every tangent function $v\in \calT_u\calM$ is of the form
\begin{align*}
v&=\left[i\widetilde{\theta}+\frac{\widetilde{q}}{\delta}
\cdot (x-2\delta q)+\frac{i\widetilde{p}}{\delta}\cdot (x-\delta q)-ip\cdot \widetilde{q}\right]u\\
&=\left[\frac{1}{\delta}\left( \widetilde{q} + i\widetilde{p}\right)\cdot x
- 2\widetilde{q}\cdot q-i\left(\widetilde{p}\cdot q-p\cdot \widetilde{q}+\widetilde{\theta}\right)\right]u,
\end{align*}
where $\widetilde{\theta}\in \mathbb{R}$, $\widetilde{q},
\widetilde{p}\in \mathbb{R}^d$, which immediately gives that $v=wu$ with $w$ a linear polynomial of the claimed form. An analogous calculation provides the tangent functions within the Bargman formalism. Due to the restriction on the real part of the constant term of the polynomial $w$, the tangent space fails to be complex-linear. By choosing $a=b=0$, $c=1$ we have $iu\in\calT_u\calM$.
\end{proof}

\subsection{Structural properties of the variational Gaussians}
Since any $u\in\calM$ is normalized and satisfies $iu\in\calT_u\calM$, we can easily draw first conclusions on the properties of the variational solutions.
\smallskip
\begin{corollary}
Let $\delta>0$ and consider the manifold of frozen Gaussian wave packets $\calM$.
\begin{enumerate}
\item
For the McLachlan variational solution $u(t)$, the energies for the Hamiltonians $H$ and $H_{u(t)}$ coincide.
\item
The Kramer--Saraceno principle leaves the phase parameter $\theta(t)$ of the frozen
Gaussian $u(t) = u(\theta(t),q(t),p(t))\in\calM$ undetermined.
\end{enumerate}
\end{corollary}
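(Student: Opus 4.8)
The plan is to treat the two assertions separately: the first is an immediate application of the machinery already in place, whereas the second rests on a short computation specific to the phase parameter.

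For part (i), I would merely check that the frozen Gaussian manifold meets the hypotheses of Corollary~\ref{coro}. Every wave packet $u(\theta,q,p)\in\calM$ is normalized, $\|u\|=1$, by the placement of $\delta$ in \eqref{eq:frozen}, and Lemma~\ref{lem:tang} shows $iu\in\calT_u\calM$ (take $a=b=0$, $c=1$). Corollary~\ref{coro} then applies verbatim and gives $E_{u(t)}=E_0(t)$, i.e. the energies for $H$ and $H_{u(t)}$ coincide.

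For part (ii), the idea is to show that $\dot\theta$ drops out of the symplectic orthogonality condition \eqref{symplectic} altogether. Since the frozen Gaussian depends on $\theta$ only through the prefactor $\e^{i\theta}$, we have $\partial_\theta u = iu$, so writing $u(t)=u(\theta(t),q(t),p(t))$ yields
\[
\dot u = \dot\theta\, iu + \dot q\cdot\nabla_q u + \dot p\cdot\nabla_p u.
\]
Inserting this into \eqref{symplectic} and isolating the $\dot\theta$-term, the coefficient of $\dot\theta$ tested against an arbitrary $v\in\calT_u\calM$ is $\Imag\langle v\,|\,iu\rangle = \Real\langle v\,|\,u\rangle$. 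The key observation is that this real part vanishes for every tangent vector: differentiating the normalization $\langle u\,|\,u\rangle=1$ along any path in $\calM$ gives $\Real\langle v\,|\,u\rangle=0$ for all $v\in\calT_u\calM$, so that $u$ is $g$-orthogonal to its own tangent space. Consequently the $\dot\theta$-contribution vanishes identically, the symplectic condition reduces to a system involving only $\dot q$ and $\dot p$, and $\dot\theta$ is left free.

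I expect the only delicate point to be the precise meaning of ``undetermined''. Structurally, $\calT_u\calM$ is odd-dimensional ($2d+1$), so the antisymmetric form $\omega$ restricted to it is necessarily degenerate; the computation above identifies $iu=\partial_\theta u$ as a vector in its radical, since $\omega(iu,w)=-\Real\langle u\,|\,w\rangle=0$ for all $w\in\calT_u\calM$. It then remains to verify consistency: taking $v=iu$ in \eqref{symplectic} gives $0=\Imag\langle iu\,|\,\tfrac{1}{i\hbar}Hu\rangle=0$ automatically, because $E_0=\langle u\,|\,Hu\rangle$ is real. Hence the trivial equation attached to the null direction imposes no constraint, and the principle determines $(q,p)$ while leaving the phase $\theta(t)$ genuinely free.
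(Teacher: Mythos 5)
Your proof is correct, and it differs from the paper's in an instructive way for part (2). Part (1) is exactly the paper's argument: verify $\|u\|=1$ and $iu\in\calT_u\calM$ (Lemma~\ref{lem:tang}) and invoke Corollary~\ref{coro}. For part (2), the paper tests the condition \eqref{symplectic_real} against the single direction $v=iu$ and observes that $\Real\lag iu\,|\,i\hbar\dot u-Hu\rag=0$ holds identically for any normalized path (norm preservation plus self-adjointness of $H$), so one of the $2d+1$ scalar equations is vacuous and the system is underdetermined by one parameter. You instead expand $\dot u=\dot\theta\,iu+\dot q\cdot\nabla_q u+\dot p\cdot\nabla_p u$ and show that the coefficient of $\dot\theta$ against every test vector, $\Imag\lag v\,|\,iu\rag=\Real\lag v\,|\,u\rag$, vanishes because the manifold consists of normalized functions, so $u$ is $g$-orthogonal to its own tangent space. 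These are two faces of the same fact---as you note, $iu$ lies in the radical of $\omega$ restricted to the odd-dimensional tangent space---but your version buys slightly more: the paper's argument by itself shows only that \emph{some} degree of freedom is unconstrained, whereas your computation shows that $\dot\theta$ literally drops out of every tested equation, so it is precisely the phase that is free; your consistency check (testing $v=iu$ yields $0=0$ since $E_0=\lag u\,|\,Hu\rag\in\R$) then recovers the paper's observation as the dual special case.
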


\begin{proof}
For the McLachlan principle, we just observe that the frozen Gaussian
manifold $\calM$ satisfies the assumptions of Corollary~\ref{coro}.
Therefore, the energies for the Hamiltonians $H$ and $H_{u(t)}$ coincide.

Next we observe that any normalized function $u(t)$ satisfies
\be\label{eq:realiu}
\mathrm{Re}\lag iu(t)\,|\,i\hbar\dot u(t)-Hu(t)\rag = 0,
\ee
since $H$ is self-adjoint.
This implies for the Kramer--Saraceno solution $u(t)$, that on the one-dimensional subspace $\{i\alpha u(t)\mid \alpha\in\R\}$, the projection condition \eqref{symplectic_real} is automatically satisfied, which leaves one of its parameters undetermined.
\end{proof}

Using the explicit form of the tangent functions, we derive a further structural property of the McLachlan solution.
\smallskip
\begin{proposition}[Dirac--Frenkel condition]\label{prop:df}
Consider for $\delta>0$ the frozen Gaussian manifold $\calM$.
The McLachlan variational solution $u(t)$ satisfies the Dirac--Frenkel condition
\[
\langle v\,|\,\dot u(t)-\frac{1}{i\hbar}Hu(t)\rangle = 0,\quad\forall v\in\calT_{u(t)}\calM.
\]
In particular, the total energy $E_0 = \lag u(t)\,|\,Hu(t)\rag$ is independent of time.
\end{proposition}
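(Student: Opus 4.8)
The plan is to upgrade the McLachlan orthogonality relation, which only controls the real part of the residual, to the full complex Dirac--Frenkel condition by exploiting the special geometry of the frozen Gaussian tangent space. Writing the residual as $r(t) = \dot u(t) - \frac{1}{i\hbar}Hu(t)$, the metric formulation of Proposition~\ref{prop:metric} supplies $\mathrm{Re}\langle v\,|\,r(t)\rangle = 0$ for all $v\in\calT_{u(t)}\calM$, and I must show that in fact $\langle v\,|\,r(t)\rangle = 0$, i.e.\ that the imaginary part vanishes as well.

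First I would record the one extra orthogonality relation the solution enjoys on this manifold. Since $\|u(t)\|=1$ and $iu(t)\in\calT_{u(t)}\calM$ by Lemma~\ref{lem:tang}, the hypotheses of Corollary~\ref{coro} are met, so \eqref{eq:cond} gives $\langle u(t)\,|\,i\hbar\dot u(t)-Hu(t)\rangle = 0$ and hence $\langle u(t)\,|\,r(t)\rangle = 0$. Combined with the McLachlan condition, the real part of $\langle w\,|\,r(t)\rangle$ then vanishes for every $w$ in the enlarged real subspace $W := \calT_{u(t)}\calM + \R u(t)$.

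The key structural step is to observe that $W$ is complex-linear, even though $\calT_{u(t)}\calM$ itself is not. This is read off directly from the explicit description in Lemma~\ref{lem:tang}: a tangent function has the form $(z\cdot x + c_0)u$ with $z\in\C^d$ arbitrary, but with the real part of the constant $c_0$ tied to $z$ through the single real constraint coming from the phase gauge. That lone constraint is the entire obstruction to complex-linearity, and adjoining the real multiples of $u(t)$ relaxes exactly it, since $(0\cdot x + s)u = su$ moves $\mathrm{Re}(c_0)$ freely. Thus $W = \{(z\cdot x + c_0)u : z\in\C^d,\ c_0\in\C\}$, which is manifestly invariant under multiplication by $i$, with $iu(t)\in\calT_{u(t)}\calM\subset W$ while $u(t)\in W\setminus\calT_{u(t)}\calM$. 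I expect this verification---that the complex-linearity defect of the tangent space is precisely the one real dimension $\R u(t)$, and that $u(t)$ rather than $iu(t)$ is the missing direction---to be the substantive point; the rest is formal.

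With $W$ complex-linear in hand the conclusion is immediate: for any $w\in W$ we also have $iw\in W$, so $\mathrm{Im}\langle w\,|\,r(t)\rangle = \mathrm{Re}\langle iw\,|\,r(t)\rangle = 0$, and therefore $\langle w\,|\,r(t)\rangle = 0$ on all of $W$, in particular on $\calT_{u(t)}\calM$, which is the claimed Dirac--Frenkel condition. For the energy statement I would note that this full condition contains the symplectic condition \eqref{symplectic}, so the McLachlan solution here simultaneously satisfies the Kramer--Saraceno principle and energy conservation follows from Lemma~\ref{lem:cons_sympl}(2); equivalently, taking $v=\dot u(t)$ in the full condition yields $\langle\dot u(t)\,|\,Hu(t)\rangle = i\hbar\|\dot u(t)\|^2\in i\R$, whence $\frac{\D}{\D t}\langle u(t)\,|\,Hu(t)\rangle = 2\,\mathrm{Re}\langle\dot u(t)\,|\,Hu(t)\rangle = 0$.
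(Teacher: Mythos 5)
Your proof is correct, and its second half takes a genuinely more structural route than the paper's. Both arguments share the same first step: since $\|u(t)\|=1$ on $\calM$ and $iu(t)\in\calT_{u(t)}\calM$, the McLachlan condition tested on $iu$ combines with the self-adjointness of $H$ to give the full complex relation $\lag u\,|\,i\hbar\dot u-Hu\rag=0$ --- the paper derives this directly from \eqref{eq:realiu}, while you invoke Corollary \ref{coro}, whose hypotheses you check correctly (norm conservation is automatic here because every element of $\calM$ is normalized). The proofs then diverge: the paper works with the explicit spanning set $iu$, $ix_mu$, $(x_m-2\delta q_m)u$ of $\calT_u\calM$ and uses $\lag iu\,|\,i\hbar\dot u-Hu\rag=0$ to transfer real and imaginary parts between the paired basis vectors $ix_mu$ and $(x_m-2\delta q_m)u$, whereas you observe once and for all that $W=\calT_u\calM+\R u=\{(z\cdot x+c_0)u: z\in\C^d,\ c_0\in\C\}$ is complex-linear --- the single real gauge constraint $\mathrm{Re}\,c_0=-2\delta\,\mathrm{Re}(z)\cdot q$ from Lemma \ref{lem:tang} being exactly the defect that adjoining $\R u$ repairs --- so that real-orthogonality of the residual $r$ on $W$ upgrades to complex orthogonality via $\mathrm{Im}\lag w\,|\,r\rag=\mathrm{Re}\lag iw\,|\,r\rag$, which is valid under the paper's convention of antilinearity in the first slot. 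Your argument is in effect the basis-free abstraction of the paper's coordinate computation (the identity $i(x_m-2\delta q_m)u=ix_mu-2\delta q_m\,(iu)$ is precisely the complex-linearity of $W$ in coordinates); it buys a cleaner mechanism that generalizes to any manifold of normalized states whose tangent spaces fail complex linearity by exactly the direction $\R u$, while the paper's concrete calculation has the side benefit of setting up the tangent-function calculus that is reused in the proof of Theorem \ref{paeq}. Your treatment of the energy claim is also sound, and indeed more explicit than the paper, which leaves this step implicit: both the reduction to Lemma \ref{lem:cons_sympl}(2) and the direct computation with $v=\dot u$ (legitimate since the McLachlan principle requires $\dot u(t)\in\calT_{u(t)}\calM$) are valid.
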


\begin{proof}
Let $u=u(t)$ be the McLachlan solution.
We examine the inner product $\lag v\,|\,i\hbar\dot u-Hu\rag$ for the tangent functions
\[
v=iu, \ ix_mu, \ (x_m-2\delta q_m)u,\quad\text{with}\quad m=1,\ldots,d,
\]
since they span the tangent space at $u$. For $v=iu$, the combination of the McLachlan
condition~\eqref{metric} with the above \eqref{eq:realiu} yields
\[\label{org1}
\lag iu\,|\, i\hbar\dot{u}-Hu\rag = 0.
\]
This equation in turn implies for the other basis functions that
\[
\lag ix_m u\,|\, i\hbar\dot{u}-Hu\rag = -i\lag (x_m-2\delta q_m) u\,|\, i\hbar\dot{u}-Hu\rag.
\]
Therefore,
\begin{align*}
&\mathrm{Re}\lag ix_m u\,|\, i\hbar\dot{u}-Hu\rag =
\mathrm{Im}\lag (x_m-2\delta q_m) u\,|\, i\hbar\dot{u}-Hu\rag,\\
&\mathrm{Im}\lag ix_m u\,|\, i\hbar\dot{u}-Hu\rag =-
\mathrm{Re}\lag (x_m-2\delta q_m) u\,|\, i\hbar\dot{u}-Hu\rag,
\end{align*}
and we obtain
\[
\mathrm{Re}\lag v\,|\,i\hbar\dot u - Hu\rag = 0,\quad\forall v\in\calT_u\calM.
\]
Hence, both the imaginary and the real part of the inner product vanish on the tangent space,
which amounts to the claimed Dirac--Frenkel property.
\end{proof}

We can also easily describe self-adjoint Hamiltonians
$H$, for which the variational approximation are exact.
\smallskip
\begin{proposition}[Exactness]\label{pro31}
If the Hamiltonian can be written as
\[
H=\lambda A^\dagger\cdot A+\mu\cdot \widehat{q}+\omega= \lambda\left(-\delta^2\Delta + \frac{1}{4\delta^2}|x|^2 - \frac{d}{2}\right) +\mu\cdot x+\omega,\]
where $\lambda, \omega\in\mathbb{R}$, $\mu\in \mathbb{R}^d$, then the variational approximation invoked by the McLachlan principle is exact, i.e., $u(t)=\psi(t)$,
provided that $\psi(0)\in \calM$. If the phase parameter of the Kramer--Saraceno solution is chosen as for the McLachlan one, then also the Kramer--Saraceno solution is exact.
\end{proposition}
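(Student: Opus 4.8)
The plan is to show that, for a Hamiltonian of the stated form, the exact Schr\"odinger flow leaves the frozen Gaussian manifold \(\calM\) invariant; once this is established, exactness is immediate. Indeed, if \(\psi(t)=\mathrm{e}^{-iHt/\hbar}\psi(0)\) stays in \(\calM\) and \(\psi(0)\in\calM\), then \(\psi\) is a smooth curve in \(\calM\), so \(\dot\psi(t)\in\calT_{\psi(t)}\calM\) and the residual \(\dot\psi(t)-\frac{1}{i\hbar}H\psi(t)\) vanishes identically. In particular both the metric orthogonality condition~\eqref{metric} and the symplectic condition~\eqref{symplectic} hold with \(u=\psi\). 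Since \(\psi(0)=u(0)\) and the variational equations of motion are ODEs on \(\calM\) with unique solutions, this forces \(u(t)=\psi(t)\). Thus the entire content lies in proving invariance of \(\calM\).

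To this end I would characterise \(\calM\) through the ladder operators. By Lemma~\ref{upro}, every \(u\in\calM\) satisfies \(A u=z u\) with \(z=q+ip\), and conversely the normalised \(L^2\)-solutions of \(A u=z u\) are exactly the frozen Gaussians \(u(\theta,q,p)\). Hence \(\calM\) is precisely the set of normalised joint eigenvectors of the commuting annihilation operators \(A_1,\dots,A_d\), with no constraint on the complex eigenvalue. To test whether \(u(t)=\psi(t)\) remains such an eigenvector, I introduce for each \(k\) the quantity \(r_k(t)=A_k u(t)-z_k(t)\,u(t)\) with a scalar function \(z_k(t)\) to be chosen, noting \(r_k(0)=0\) since \(u(0)\in\calM\).

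The crux is a commutator computation. Using \([A_k,A_\ell^\dagger]=\delta_{k\ell}\), \([A_k,A_\ell]=0\) and \([A_k,\widehat{q}_j]=\delta\,\delta_{kj}\), one finds
\[
[A_k,H]=\lambda\,[A_k,A^\dagger\cdot A]+\mu\cdot[A_k,\widehat q]=\lambda A_k+\delta\mu_k,
\]
which is affine in \(A_k\); this is exactly where the special structure of \(H\) is used, and it is the one genuinely non-routine step. Feeding \(A_k H u=H A_k u+(\lambda A_k+\delta\mu_k)u\) into \(i\hbar\,\dot r_k=A_k(i\hbar\dot u)-i\hbar\dot z_k\,u-z_k(i\hbar\dot u)\) and writing \(A_k u=r_k+z_k u\), the terms proportional to \(Hu\) cancel and one is left with
\[
i\hbar\,\dot r_k=(H+\lambda)\,r_k+(\lambda z_k+\delta\mu_k-i\hbar\dot z_k)\,u.
\]
Choosing \(z_k(t)\) to solve the scalar linear ODE \(i\hbar\dot z_k=\lambda z_k+\delta\mu_k\) with \(z_k(0)=q_k(0)+ip_k(0)\) annihilates the inhomogeneity, leaving \(i\hbar\,\dot r_k=(H+\lambda)r_k\) with \(r_k(0)=0\). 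Since \(H+\lambda\) is self-adjoint, uniqueness for this linear evolution forces \(r_k(t)\equiv0\).

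Consequently \(A_k u(t)=z_k(t)\,u(t)\) for all \(k\) and \(t\); together with norm conservation \(\|u(t)\|=\|\psi(0)\|=1\) (from self-adjointness of \(H\)), the characterisation of \(\calM\) gives \(u(t)\in\calM\), so the flow preserves the manifold, and the reduction in the first paragraph yields exactness of the McLachlan solution. For the Kramer--Saraceno principle the residual of \(\psi\) still vanishes, so~\eqref{symplectic} holds; the only freedom is the phase \(\theta(t)\), which the principle leaves undetermined, so fixing it to agree with the McLachlan (equivalently, the exact) phase gives \(u(t)=\psi(t)\) there as well. I expect the commutator identity to be the sole obstacle, the remaining steps being routine.
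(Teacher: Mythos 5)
Your proof is correct, but it takes a genuinely different route from the paper's. The paper argues infinitesimally and stays on the manifold: using the eigenrelations $Au=zu$ and $A^\dagger u=(x/\delta-z)u$ from Lemma~\ref{upro}, it computes $Hu$ explicitly as a linear polynomial times $u$, matches the coefficients against the description of $\calT_u\calM$ in Lemma~\ref{lem:tang} to conclude $\frac{1}{i\hbar}Hu\in\calT_u\calM$ for every $u\in\calM$, and then observes that both orthogonality conditions \eqref{metric} and \eqref{symplectic} force $\dot u=\frac{1}{i\hbar}Hu$, so $u(t)=\psi(t)$. You instead prove invariance of $\calM$ under the exact unitary flow: the commutator identity $[A_k,H]=\lambda A_k+\delta\mu_k$ (which is correct, and is the same structural fact the paper exploits in the guise of $Hu$ being a linear-polynomial multiple of $u$) yields the closed equation $i\hbar\dot r_k=(H+\lambda)r_k$ for the residual $r_k=A_ku-z_ku$ once $z_k$ solves $i\hbar\dot z_k=\lambda z_k+\delta\mu_k$, and unitarity kills $r_k$; you then transfer back to the variational solution via uniqueness of the variational ODE. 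The two arguments are essentially infinitesimal and integrated versions of the same tangency statement. The paper's computation is more elementary and self-contained (finite-dimensional, no PDE well-posedness needed, and it delivers the explicit coefficients $a,b,c$ of the variational vector field as a by-product), while your coherent-state argument is conceptually cleaner and generalizes verbatim to any self-adjoint $H$ with $[A,H]$ affine in $A$ with scalar coefficients. Your route does lean on three ingredients the paper's avoids, all routine but worth flagging: the converse characterization of $\calM$ as the normalized $L^2$ eigenvectors of $A$ (true, by solving $\nabla u=(z/\delta-x/(2\delta^2))u$); domain/regularity for the evolution of $r_k$; and uniqueness for the McLachlan ODE on $\calM$, together with the fact that for Kramer--Saraceno the symplectic condition determines everything except the phase direction --- a point you assert but which the paper only establishes in the proof of Theorem~\ref{paeq}. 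Your treatment of the undetermined Kramer--Saraceno phase matches the paper's.
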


\begin{proof}
We calculate
\begin{align*}
A^\dagger\cdot A
&= \left(\frac{1}{2\delta}x-\delta\nabla\right)\cdot \left(\frac{1}{2\delta}x+\delta\nabla\right)= -\delta^2\Delta +\frac{1}{4\delta^2}|x|^2 + \frac12\left( x\cdot\nabla-\nabla\cdot x\right)\\
&= -\delta^2\Delta +\frac{1}{4\delta^2}|x|^2  -\frac{d}{2},
\end{align*}
since $\nabla\cdot x = \sum\limits_m \partial_mx_m = d + x\cdot\nabla$.
Using the relations \eqref{upro2}, we have for fixed $u\in\calM$
\begin{align*}
Hu&=\lambda z\cdot (\frac{x}{\delta}-z)u+\mu\cdot xu+\omega u\\
&=\big(\frac{\lambda}{\delta}q+\mu\big)\cdot x u+\left(\omega-\lambda(|q|^2-|p|^2)\right)u+
\frac{i\lambda}{\delta}p\cdot (x-2\delta q)u,
\end{align*}
which implies that $\frac{1}{i\hbar}Hu = wu\in\calT_u\calM$ for a linear polynomial $w$ of the form
\[
w(x)=(a+ib)\cdot x-a\cdot 2\delta q+ic
\]
with $a=\frac{\lambda}{\hbar\delta}\,p,\quad
b=-\frac{1}{\hbar}\left(\frac{\lambda}{\delta}\,q+\mu\right),\quad
c=\frac{\lambda}{\hbar}(|q|^2-|p|^2)-\frac{\omega}{\hbar}$.
Recalling the orthogonality condition \eqref{metric} for the McLachlan principle, or the symplectic formulation \eqref{symplectic} for the Kramer-Saraceno principle, we get that $\dot{u}= \frac{1}{i\hbar} Hu$, which immediately gives $u(t)=\psi(t)$.
\end{proof}

\subsection{Equations of motion and energy fluctuations}
We next derive equations of
motion for both variational principles that are formulated in terms of the ladder operator $A$ and the Hamiltonian $H$.
\begin{theorem}[Equations of motion]\label{paeq}
Let $\delta>0$ and consider an initial datum $u(0) = \psi_0\in\calM$.
The McLachlan variational principle
determines a wave-packet $u(t)$ of the form \eqref{eq:frozen},
where the parameters $\theta(t)\in\mathbb{R}$, $z(t)=q(t)+ip(t)\in\mathbb{C}^d$ satisfy
\begin{align}
&\hbar \dot{\theta}(t)=\mathrm{Re}\left(\overline{z}(t)\cdot \lag  u(t)\,|\,[H,A]u(t)\rag\right) - E_0,
\label{theteq}\\*[1ex]
&\hbar \dot z(t) = i\lag u(t)\,|\,[H,A]u(t)\rag.\label{zeq}
\end{align}
The Kramer--Saraceno principle uniquely determines the parameter $z(t)$ to satisfy \eqref{zeq}, but leaves the phase $\theta(t)$ of the wave-packet undetermined.
For both variational principles, the total energy $E_ 0 = \lag u(t)\,|\,Hu(t)\rag$ is independent of time.
\end{theorem}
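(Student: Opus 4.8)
The plan is to parametrise the tangent vector $\dot u(t)$ by the chain rule and then read off the equations of motion by testing the appropriate orthogonality condition against a convenient spanning set of $\calT_{u(t)}\calM$. Writing $u=u(\theta,q,p)$ and differentiating the explicit form \eqref{eq:frozen} in time gives
\[
\dot u = \dot\theta\,\partial_\theta u + \sum_{k=1}^d\dot q_k\,\partial_{q_k}u + \sum_{k=1}^d\dot p_k\,\partial_{p_k}u,
\]
with $\partial_\theta u = iu$, $\partial_{q_k}u=(\tfrac{x_k}{\delta}-2q_k-ip_k)u$ and $\partial_{p_k}u=\tfrac{i}{\delta}(x_k-\delta q_k)u$; these are exactly the tangent directions of Lemma~\ref{lem:tang}. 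For the McLachlan solution I may invoke the \emph{full} Dirac--Frenkel orthogonality $\lag v\,|\,i\hbar\dot u-Hu\rag=0$ established in Proposition~\ref{prop:df}, which then also holds for every $v$ in the complex span of $\calT_u\calM$, in particular for $v=u$ and for $v=A_k^\dagger u$ (since $(x_k-2\delta q_k)u\in\calT_u\calM$ and $A_k^\dagger u=\tfrac1\delta(x_k-2\delta q_k)u+\overline z_k u$ by Lemma~\ref{upro}). Throughout I lean on the ladder identities $Au=zu$ and $A^\dagger u=(x/\delta-z)u$.

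To extract \eqref{zeq} I test against $v=A_k^\dagger u$. Using $\lag A_k^\dagger u\,|\,w\rag=\lag u\,|\,A_k w\rag$, the annihilation relation $A_k u=z_k u$ together with its time derivative $A_k\dot u=\dot z_k u+z_k\dot u$, and the splitting $A_k H=HA_k+[A_k,H]$, I obtain
\[
0=\lag A_k^\dagger u\,|\,i\hbar\dot u-Hu\rag = i\hbar\dot z_k + z_k\big(i\hbar\lag u\,|\,\dot u\rag-E_0\big) + \lag u\,|\,[H,A_k]u\rag.
\]
The bracketed term equals $z_k\lag u\,|\,i\hbar\dot u-Hu\rag$, which vanishes for the McLachlan solution by the $v=u$ case of Proposition~\ref{prop:df}; hence $i\hbar\dot z_k=-\lag u\,|\,[H,A_k]u\rag$, which is \eqref{zeq}. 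For the phase I test against $v=u$ (available since $iu\in\calT_u\calM$), giving $i\hbar\lag u\,|\,\dot u\rag=E_0$; inserting the chain-rule value $\lag u\,|\,\dot u\rag=i(\dot\theta+q\cdot\dot p-p\cdot\dot q)$, computed from the expectations of Lemma~\ref{upro}, relates $\dot\theta$ to $q\cdot\dot p-p\cdot\dot q=\mathrm{Im}(\overline z\cdot\dot z)$. Substituting \eqref{zeq} turns $\mathrm{Im}(\overline z\cdot\dot z)$ into $\tfrac1\hbar\,\mathrm{Re}(\overline z\cdot\lag u\,|\,[H,A]u\rag)$ and yields \eqref{theteq}.

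For the Kramer--Saraceno solution only the symplectic (real-part) condition \eqref{symplectic_real} is available. Testing it against the two real directions $(x_k-2\delta q_k)u$ and $i(x_k-2\delta q_k)u$ forces both the real and the imaginary parts of $\lag(x_k-2\delta q_k)u\,|\,i\hbar\dot u-Hu\rag$ to vanish, so the full complex inner product vanishes on these directions; the identical algebra then delivers \eqref{zeq}, now because the $z_k$-multiple of $\lag u\,|\,i\hbar\dot u-Hu\rag$ cancels on its own rather than vanishing. The remaining phase direction $iu$ reproduces only the identity \eqref{eq:realiu}, which is automatic for normalised $u$, so $\theta(t)$ stays undetermined. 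Energy conservation is then immediate: for Kramer--Saraceno it is Lemma~\ref{lem:cons_sympl}(2), and for McLachlan it is part of Proposition~\ref{prop:df}. I expect the principal difficulty to be the bookkeeping of the factors $i$ and $\hbar$ and of the signs in the cancellation of the $\lag u\,|\,\dot u\rag$ and $E_0$ contributions, together with verifying that $A_k^\dagger u$ (respectively $(x_k-2\delta q_k)u$) is an admissible test vector so that the orthogonality conditions genuinely apply.
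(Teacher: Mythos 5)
Your proof is correct, and while it follows the same skeleton as the paper's own proof --- parametrize $\dot u$ by the chain rule, invoke Proposition~\ref{prop:df} for the McLachlan solution, and test the orthogonality conditions against a spanning set of tangent directions --- your implementation is genuinely different and arguably cleaner. The paper works in coordinates: it first extracts the $\dot\theta$-relation from $v=iu$, eliminates $\dot\theta$ from the expression for $\dot u$, and then evaluates $\lag ix_mu\,|\,i\hbar\dot u-Hu\rag$ explicitly using the Gaussian moments $\lag x_mu\,|\,x_nu\rag=\delta^2(\delta_{mn}+4q_mq_n)$ from Lemma~\ref{upro}. You instead pass to the complex span of $\calT_u\calM$ (legitimate: vanishing of the complex inner product on a real-linear set extends to its complex span by antilinearity, and $u=-i(iu)$ and $A_k^\dagger u=\tfrac1\delta(x_k-2\delta q_k)u+\overline z_k u$ do lie in that span) and replace all moment computations by the ladder algebra $A_k\dot u=\dot z_k u+z_k\dot u$, obtained by differentiating $A_ku=z_ku$ in time; this buys a two-line derivation of \eqref{zeq}. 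Your Kramer--Saraceno treatment is also more transparent than the paper's: testing \eqref{symplectic_real} against both $(x_k-2\delta q_k)u$ and $i(x_k-2\delta q_k)u$ recovers the full complex orthogonality on those directions, and since $(x_k-2\delta q_k)u=\delta(A_k^\dagger-\overline z_k)u$, the phase-dependent term $z_k\lag u\,|\,i\hbar\dot u-Hu\rag$ cancels identically, making it structurally evident why $z(t)$ is determined while $\theta(t)$ is not; the paper instead obtains the real and imaginary parts of \eqref{zeq} from two separate families of test vectors after eliminating $\dot\theta$ via its equation \eqref{eq:iu}. One caveat: carried to the end, your computation (exactly like the paper's own intermediate steps) gives $\hbar\dot\theta=-\hbar\,\mathrm{Im}(\overline z\cdot\dot z)-E_0=-\mathrm{Re}\big(\overline z\cdot\lag u\,|\,[H,A]u\rag\big)-E_0$, since $\mathrm{Re}(\overline z\cdot w)=q\cdot\mathrm{Re}\,w+p\cdot\mathrm{Im}\,w$ for $w=\lag u\,|\,[H,A]u\rag$; the plus sign printed in \eqref{theteq} and in the last line of the paper's proof appears to be a sign slip (one can confirm the minus sign on the exactly solvable Hamiltonian of Proposition~\ref{pro31}, where the exact phase evolves as $\hbar\dot\theta=-\omega-\delta\mu\cdot q$), so your closing assertion that the substitution ``yields \eqref{theteq}'' should be read modulo this sign discrepancy, which you share with --- and in fact expose in --- the paper, rather than as a flaw in your argument.
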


\begin{proof} Essentially repeating the calculation of Lemma~\ref{lem:tang}, we may write the time derivative of a frozen Gaussian wave packet as
\begin{align}
\dot{u}&=\big[i\dot{\theta}+\frac{\dot{q}}{\delta}
\cdot (x-2\delta q)+\frac{i\dot{p}}{\delta}\cdot (x-\delta q)-ip\cdot \dot{q}\big]u\nn\\
&=\big[i\dot{\theta}+\frac{\dot{z}}{\delta}
\cdot (x-\delta q)-\dot{q}\cdot z\big]u\label{du}.
\end{align}
We now evaluate the inner product
$\lag v\,|\,i\hbar\dot u-Hu\rag$ for different tangent functions $v\in\calT_u\calM$. For $v=iu\in \calT_u\calM$, we get
\begin{align}\nonumber
\lag iu\,|\, i\hbar\dot{u}-Hu\rag&=\hbar\langle u\,|\,\big(i\dot{\theta}-\dot z\cdot q - \dot q\cdot z\big)u\rangle
+\frac{\hbar}{\delta}\,\dot z\cdot\langle u\,|\,xu\rangle+i\lag u\,|\, Hu\rag\\\nonumber
&=\hbar\big(i\dot{\theta}-\dot{z}\cdot q-\dot{q}\cdot z\big)+2\hbar\dot{z}\cdot q+iE_0\\\label{eq:iu}
&=i\left(\hbar \dot{\theta}+\hbar\dot{p}\cdot q-\hbar\dot{q}\cdot p+E_0\right),
\end{align}
where we have used the position expectation $\lag u\,|\,xu\rag = 2\delta q$.

\medskip
We now focus on the McLachlan solution. The orthogonality condition~\eqref{metric} implies for the phase parameter
\[
\hbar \dot{\theta}=-\hbar\dot{p}\cdot q+\hbar\dot{q}\cdot p-E_0.
\]
We next eliminate $\dot\theta$ in the expression for $\dot u$ given in \eqref{du}. We obtain that
\be\label{du2}
\dot{u}= \big[-i\dot{p}\cdot q+i\dot{q}\cdot p-\frac{i}{\hbar}E_0+\frac{\dot{z}}{\delta}
\cdot (x-\delta q)-\dot{q}\cdot z\big]u
= \big[-\frac{i}{\hbar}E_0+ \dot{z} \cdot \big(\frac{x}{\delta}-2q\big)\big]u.
\ee
Using the Dirac--Frenkel property of Proposition~\ref{prop:df}, we work with the
tangent functions $v=ix_mu$ and conditions
\[\label{org2}
\lag ix_m u\,|\, i\hbar\dot{u}-Hu\rag = 0,\quad m=1,\ldots,d,
\]
for deriving the equations of motion for the $z$-parameter.
We insert the expression for $\dot u$ obtained in equation~\eqref{du2} and calculate
\begin{align*}
&\quad\lag ix_m u\,|\, i\hbar\dot{u}-Hu\rag\nn\\
&=-iE_0\lag x_m u\,|\,u\rag +\frac{\hbar}{\delta} \dot{z}\cdot\lag x_m u\,|\,xu\rag
-2\hbar \dot z\cdot q\lag x_mu\,|\, u\rag+i\lag x_m u\,|\, Hu\rag\nn\\
&=2\delta q_m \left(-iE_0-2\hbar \dot z\cdot q\right)+\hbar\delta \left(\dot z_m + 4 q_m \dot z\cdot q\right)+i\delta\langle (z_m+A_m^\dagger)u\,|\, Hu\rangle\nn\\*[1ex]
&=-2i\delta q_mE_0+\hbar \delta\dot{z}_m+i\delta \overline z_mE_0+i\delta \lag u\,|\, A_mHu\rag\nn\\*[1ex]
&=\hbar\delta\dot{z}_m-i\delta E_0 z_m+i\delta\lag  u\,|\, A_mHu\rag\\*[1ex]
&=\hbar\delta\dot{z}_m - i\delta\lag u\,|\,[H,A_m]u\rag,
\end{align*}
where we have used $x_m u= \delta (z_m+A^\dagger_m)u$, and
\[
\lag x_mu\,|\,x_nu\rag =
\delta^2 \left(\delta_{mn} + 4 q_m q_n\right),\quad
E_0z_m = \lag u\,|\,Hz_m u\rag = \lag u\,|\,HA_mu\rag.
\]
Setting the inner product to zero, we obtain the claimed equation \eqref{zeq}. Inserting this information in the equation for
$\theta$, we derive
\[
\hbar \dot{\theta}
= -\mathrm{Re}\lag u\,|\,[H,A]u\rag\cdot q -\mathrm{Im}\lag u\,|\,[H,A]u\rag\cdot p - E_0= \mathrm{Re}\left(\overline{z}\cdot \lag  u\,|\,[H,A]u\rag\right) - E_0.
\]
which is equation~\eqref{theteq}.

\medskip
Now we turn to the Kramer--Saraceno principle. Without any assumption on the phase $\theta$, we calculate
\begin{align*}
&\quad \lag ix_mu\,|\,i\hbar\dot u - Hu\rag\\
&=
\langle ix_mu\,|\,i\hbar \big[i\dot{\theta}+\frac{\dot{z}}{\delta}
\cdot (x-\delta q)-\dot{q}\cdot z\big]u\rangle +i \delta\langle(z_m+A^\dagger_m)u\, |\,Hu\rangle\\
&=
\hbar\big[i\dot{\theta}-\dot{z}\cdot q-\dot{q}\cdot z\big]\,2\delta q_m +
\hbar\delta \left(\dot z_m + 4 q_m \dot z\cdot q\right)
+i \delta\langle(z_m+A^\dagger_m)u\, |\,Hu\rangle\\*[1ex]
&=
\hbar\big[i\dot{\theta}+\dot{z}\cdot q-\dot{q}\cdot z\big]\,2\delta q_m +
\hbar\delta \dot z_m +i \delta\langle(z_m+A^\dagger_m)u\, |\,Hu\rangle.
\end{align*}
The orthogonality condition \eqref{symplectic_real} requires the real part vanishes, that is
\[
0 = \hbar\dot q_m - \mathrm{Im}\langle (z_m+A_m^\dagger)u\,|\,Hu\rangle.
\]
Since $H$ is self-adjoint, we have
\begin{align*}
\langle (z_m+A_m^\dagger)u\,|\,Hu\rangle&=
\langle HA_mu\,|\,u\rangle + \langle u\,|\,A_mHu\rangle\\
&= -\langle u\,|\,[H,A_m]u\rangle + 2\,\mathrm{Re}\langle u\,|\,HA_m\rangle,\end{align*}
and $0 = \hbar\dot q_m + \mathrm{Im}\langle u\,|\,[H,A_m]u\rangle$,
which is the real part of equation \eqref{zeq}. Now, we consider the tangent functions $v=(x_m-2\delta q_m) u\in \calT_u\calM$. Using equation \eqref{eq:iu}, we eliminate $\dot\theta$ and calculate
\begin{align*}
&\quad\mathrm{Re}\lag (x_m-2\delta q_m)u\,|\,i\hbar\dot u - Hu\rag\\
&=
\mathrm{Re}\lag x_mu\,|\,i\hbar\dot u -Hu\rag - 2\delta q_m\,\mathrm{Re}\lag u\,|\,i\hbar\dot u -Hu\rag\\
&= -\mathrm{Im}\lag ix_mu\,|\,i\hbar\dot u -Hu\rag+2\delta q_m\mathrm{Im}\lag iu\,|\,i\hbar\dot u -Hu\rag\\
&= \delta(-\hbar \dot p_m -\mathrm{Re}\langle(z_m+A^\dagger_m)u\, |\,Hu\rangle+2q_mE_0).
\end{align*}
Since
\begin{align*}
&\quad 2q_mE_0 -\mathrm{Re}\langle(z_m+A^\dagger_m)u\, |\,Hu\rangle\\
&=
2\mathrm{Re}\langle u\,|\,HA_mu\rangle - \mathrm{Re}\langle HA_mu\, |\,u\rangle
-\mathrm{Re}\langle u\, |\,A_mHu\rangle\\*[1ex]
&= \mathrm{Re}\langle u\, |\,[H,A_m]u\rangle,
\end{align*}
the Kramer--Saraceno principle requires that $0 =\hbar \dot p_m -\mathrm{Re}\,\langle u\, |\,[H,A_m]u\rangle$, which is the imaginary part of equation \eqref{zeq}.
\end{proof}

Revisiting the previous proof, we may write the evolution of the McLachlan solution as
\begin{align*}
i\hbar \dot u
&= E_0 u(t) + i\hbar\, \dot z(t)\cdot (\frac{x}{\delta}-2q(t))u(t)\\
&= E_0 u(t)  +\lag u(t)\,|\,[A,H]u(t)\rag\cdot(A^\dagger-\overline z(t)) u(t).
\end{align*}
Since $(A-z)u = 0$ and $\overline{\lag u\,|\,[A,H]u\rag} = \lag [A,H]u\,|\,u\rag = -\langle u\,|\,[A^\dagger,H]u\rangle$,
we may symmetrise the above equation and write $i\hbar\dot u(t) = H_{u(t)}u(t)$  with a variational Hamiltonian of the form
\be\label{eq:GaussHam}
H_{u} =  E_0  +\langle u\,|\,[A,H]u\rangle\cdot(A^\dagger-\overline z)
-\langle u\,|\,[A^\dagger,H]u\rangle \cdot(A-z).
\ee

As pointed out by R. Martinazzo and I. Burghardt in \cite{MB},
the energy fluctuation of the variational Hamiltonian $H_{u(t)}$ has a simple description in terms of
the time derivative $\dot z(t)$.

\begin{corollary} The variational solution $u(t)$ as given in Theorem~\ref{paeq} defines a variational Hamiltonian
with energy fluctuation
\[
\|(H_{u(t)}-E_0)u(t)\| = \hbar\, |\dot z(t)|.
\]
This implies in particular for the minimal distance $\varepsilon_\calM(u(t))$ that
\be\label{local-g}
\varepsilon^2_\calM(u(t)) = \frac{1}{\hbar^2}
\Big(\|(H-E_0)u(t)\|^2-\hbar^2|\dot{z}(t)|^2\Big).
\ee
\end{corollary}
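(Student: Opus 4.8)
The plan is to start from the explicit variational Hamiltonian~\eqref{eq:GaussHam} and evaluate the energy fluctuation $(H_{u}-E_0)u$ directly. Applying \eqref{eq:GaussHam} to $u$ and invoking the annihilation property $(A-z)u=0$ from \eqref{upro2}, the second summand drops out entirely, leaving
\[
(H_{u}-E_0)u = \lag u\,|\,[A,H]u\rag\cdot(A^\dagger-\overline z)u.
\]
Next I would identify the vector coefficient. Since $[A,H]=-[H,A]$, the equation of motion~\eqref{zeq} gives $\lag u\,|\,[A,H]u\rag = -\lag u\,|\,[H,A]u\rag = i\hbar\dot z$, so that $(H_u-E_0)u = i\hbar\sum_{m=1}^d \dot z_m\,(A^\dagger_m-\overline z_m)u$.

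The second step is to rewrite the operators $A^\dagger_m-\overline z_m$ in position form and to show that their action on $u$ produces an orthonormal family. Using $A^\dagger u = (x/\delta - z)u$ from \eqref{upro2} together with $z_m+\overline z_m = 2q_m$, one obtains $(A^\dagger_m-\overline z_m)u = (x_m/\delta - 2q_m)u$. I would then compute the Gram matrix $\lag (A^\dagger_m-\overline z_m)u\,|\,(A^\dagger_n-\overline z_n)u\rag$ by expanding it and inserting the moments from Lemma~\ref{upro}, namely $\lag x_m u\,|\,x_n u\rag = \delta^2(\delta_{mn}+4q_mq_n)$, $\lag u\,|\,x_n u\rag = 2\delta q_n$, and $\|u\|=1$. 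The cross terms cancel and the Gram matrix reduces to $\delta_{mn}$. Hence the family $\{(A^\dagger_m-\overline z_m)u\}_{m=1}^d$ is orthonormal, and
\[
\|(H_u-E_0)u\|^2 = \hbar^2\sum_{m=1}^d|\dot z_m|^2 = \hbar^2|\dot z|^2,
\]
which gives the first claimed identity after taking square roots.

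Finally, the second identity follows immediately from Corollary~\ref{coro}. Indeed, the frozen Gaussian manifold satisfies $\|u(t)\|=1$ and $iu(t)\in\calT_{u(t)}\calM$ (Lemma~\ref{lem:tang}), so the hypotheses of Corollary~\ref{coro} hold and
\[
\varepsilon_\calM^2(u) = \frac{1}{\hbar^2}\big(\|(H-E_0)u\|^2 - \|(H_u-E_0)u\|^2\big).
\]
Substituting the freshly computed $\|(H_u-E_0)u\|^2 = \hbar^2|\dot z|^2$ yields~\eqref{local-g}.

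I expect the only nontrivial step to be the Gram-matrix computation: one must be careful that $\lag\cdot\,|\,\cdot\rag$ is antilinear in its first slot, so the coefficients enter as $\overline{(i\hbar\dot z_m)}\,(i\hbar\dot z_n)$, and that the $\delta^{-2}$, $\delta^{-1}$ and constant contributions from the second moments of the shifted Gaussian conspire to cancel all $q_mq_n$ terms, leaving precisely $\delta_{mn}$. Everything else is an immediate consequence of \eqref{upro2}, \eqref{zeq}, and Corollary~\ref{coro}.
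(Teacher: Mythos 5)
Your proof is correct and follows essentially the same route as the paper: you reduce to the same identity $(H_{u}-E_0)u = i\hbar\,\dot z\cdot\big(x/\delta-2q\big)u$ and evaluate its norm with the moments of Lemma~\ref{upro}, merely packaging the paper's $(4-8+4)$-cancellation as orthonormality of the family $\{(x_m/\delta-2q_m)u\}_{m=1}^d$. Your explicit appeal to Corollary~\ref{coro} (with $\|u\|=1$ and $iu\in\calT_u\calM$ from Lemma~\ref{lem:tang}) for \eqref{local-g} correctly supplies a step the paper leaves implicit.
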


\begin{proof} We write $(H_u-E_0)u = i\hbar\, \dot z\cdot\left( x/\delta-2q\right) u$
and calculate the norm as
\[
\left\|\dot z \cdot\left( \frac{x}{\delta}-2q\right)u\right\|^2=
\frac{1}{\delta^2}\|(\dot z\cdot x) u\|^2 - \frac{4}{\delta}\,\mathrm{Re}\lag (\dot z\cdot x)u\,|\,(\dot z\cdot q) u\rag + 4|\dot z\cdot q|^2.
\]
Applying \eqref{upro1}, we have
\begin{align*}
&\|(\dot{z}\cdot x)u\|^2=\sum_{m,n}
\overline{\dot{z}_m}\dot{z}_n\lag x_m u\,|\,x_nu\rag=\delta^2|\dot{z}|^2+4\delta^2|\dot z\cdot q|^2,\\
&\lag (\dot z\cdot x)u\,|\,(\dot z\cdot q) u\rag =
(\dot z\cdot q) \sum_m \overline{\dot z_m} \lag x_mu\,|\,u\rag= 2\delta |\dot z\cdot q|^2.\end{align*}
Hence by definition, we get $\|(H_u-E_0)u\|^2 = \hbar^2 |\dot z|^2 +(4-8+4)\hbar^2 |\dot z\cdot q|^2 = \hbar^2|\dot z|^2$ and the proof is completed.
\end{proof}

\subsection{Schr\"odinger dynamics for frozen Gaussians}
We consider a Schr\"odinger operator
\[
H=\frac{\widehat{p}^2}{2}+V=-\frac{\hbar^2}{2}\Delta +V
\]
and reformulate the equations of motion of Theorem~\ref{paeq} such that they resemble those
of classical mechanics, up to an averaging of the potential's gradient.

\begin{corollary}\label{th34}
For the Schr\"odinger Hamiltonian $H=-\frac{\hbar^2}{2}\Delta +V$, the variational equations of motion for the
parameters $q$ and $p$, as given in Theorem~\ref{paeq}, can be rewritten as
\be\label{zd}
\dot q = \frac{\hbar}{2\delta^2}\,p,\qquad \dot p = -\frac{\delta}{\hbar}\lag u\,|\,(\nabla V)u\rag.
\ee
\end{corollary}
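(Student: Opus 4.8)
The plan is to start from the equation of motion \eqref{zeq} for the complex parameter $z=q+ip$, namely $\hbar\dot z(t)=i\lag u(t)\,|\,[H,A]u(t)\rag$, and to reduce it to the two real equations \eqref{zd} by computing the commutator $[H,A]$ explicitly for the Schr\"odinger Hamiltonian $H=-\frac{\hbar^2}{2}\Delta+V$ and then taking expectation values in $u$. Since $z=q+ip$ with $q,p$ real, the real and imaginary parts of $\hbar\dot z$ deliver the equations for $\dot q$ and $\dot p$, respectively.

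First I would compute the commutator componentwise. Writing $A_k=\frac{x_k}{2\delta}+\delta\partial_k$, the kinetic part of $H$ commutes with $\partial_k$, so $[H,\partial_k]=[V,\partial_k]=-(\partial_k V)$ acts as multiplication by the $k$-th partial derivative of the potential, while a direct calculation using $\Delta(x_k\varphi)=2\partial_k\varphi+x_k\Delta\varphi$ gives $[H,x_k]=-\hbar^2\partial_k$. Combining these yields
\[
[H,A_k]=-\frac{\hbar^2}{2\delta}\,\partial_k-\delta\,(\partial_k V),\qquad k=1,\ldots,d.
\]

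Next I would take the expectation value in $u$. The only nontrivial ingredient is $\lag u\,|\,\partial_k u\rag$, which I would evaluate from the gradient formula $\nabla u=u\big(-\frac{1}{2\delta^2}x+\frac{1}{\delta}z\big)$ of Lemma~\ref{upro} together with $\lag u\,|\,\widehat q u\rag=2\delta q$ and $\|u\|=1$, obtaining $\lag u\,|\,\partial_k u\rag=ip_k/\delta$. Substituting gives
\[
\lag u\,|\,[H,A_k]u\rag=-\frac{i\hbar^2 p_k}{2\delta^2}-\delta\,\lag u\,|\,(\partial_k V)u\rag,
\]
so that $\hbar\dot z_k=i\lag u\,|\,[H,A_k]u\rag=\frac{\hbar^2 p_k}{2\delta^2}-i\delta\,\lag u\,|\,(\partial_k V)u\rag$. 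Since $\lag u\,|\,(\partial_k V)u\rag$ is real (the operator is real multiplication and $|u|^2$ is a probability density), splitting $\dot z_k=\dot q_k+i\dot p_k$ into real and imaginary parts yields exactly \eqref{zd}.

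There is no serious obstacle; the argument is essentially bookkeeping. The only points requiring a little care are the commutator $[H,x_k]=-\hbar^2\partial_k$, where one must track the factor of two arising from the second-order derivative, and the evaluation of $\lag u\,|\,\partial_k u\rag$, where the real contribution of the Gaussian's center cancels and leaves the purely imaginary momentum term.
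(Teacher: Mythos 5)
Your proposal is correct and follows essentially the same route as the paper: compute the commutator of the Schr\"odinger Hamiltonian with the ladder operator (the paper writes $[A,H]=\frac{\hbar^2}{2\delta}\nabla+\delta\nabla V$, the opposite sign convention to your $[H,A_k]$, but the two agree), average over the Gaussian using $\lag u\,|\,\nabla u\rag = ip/\delta$ from Lemma~\ref{upro}, and split $\hbar\dot z$ into real and imaginary parts. Your componentwise bookkeeping, including $[H,x_k]=-\hbar^2\partial_k$ and the reality of $\lag u\,|\,(\partial_k V)u\rag$, matches the paper's vector-notation calculation step for step.
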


\begin{proof}
We calculate the commutator,
\[
[A,H]= \left[\tfrac{1}{2\delta}x+\delta\nabla,-\tfrac{\hbar^2}{2}\Delta + V\right]= -\frac{\hbar^2}{4\delta}[x,\Delta] + \delta [\nabla,V]
= \frac{\hbar^2}{2\delta}\nabla + \delta\nabla V,
\]
and average it over the Gaussian wave packet,
\[
\lag u\,|\,[A,H]u\rag
= \frac{\hbar^2}{2\delta}\lag u\,|\,\nabla u\rag + \delta\lag u\,|(\nabla V)u\rag= \frac{\hbar^2}{2\delta^2}ip + \delta\lag u\,|(\nabla V)u\rag,
\]
which proves \eqref{zd}.
\end{proof}

In the Schr\"odinger case, we can relate the approximation error to the quadratic
remainder of the potential $V$, when expanded around the center of the frozen wave packet, as follows:

\begin{theorem}\label{th35}
We consider a Schr\"odinger operator $H = -\frac{\hbar^2}{2}\Delta + V$.
Suppose that the potential $V$ is a smooth function such that there exists a constant $C_2>0$ with
\[
|\partial^\alpha V(x)| \le C_2\quad \text{for all}\quad  |\alpha| = 2, \ x\in\R^d.
\]
For $q\in\R^d$, denote by $W_{2,q}$ the quadratic remainder of the Taylor expansion of the potential $V$ around the point
$2\delta q$.
Then, the McLachlan variational solution~$u(t)$ satisfies
\be\label{localm}
\varepsilon_\calM(u(t))\le \frac{\hbar}{8\delta^2}\sqrt{d^2+2d} + \frac{1}{\hbar}\|W_{2,q(t)}u(t)\|\le
\left( \frac{\hbar}{8\delta^2} + \frac{\delta^2}{2\hbar}C_2\right) \sqrt{d^2+2d}.
\ee
The Kramer--Saraceno solution satisfies the same estimate, if its phase parameter is
chosen as for the McLachlan solution.
\end{theorem}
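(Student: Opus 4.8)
The plan is to exploit that $\varepsilon_\calM(u)=\min_{w\in\calT_u\calM}\|w-\tfrac{1}{i\hbar}Hu\|$ is a \emph{minimal} distance, so that any admissible trial tangent vector already yields an upper bound. A natural reference is supplied by the exactness result of Proposition~\ref{pro31}: I would compare $H$ against the ``best quadratic'' Hamiltonian $H_{\mathrm{ref}}=\frac{\hbar^2}{2\delta^2}A^\dagger\cdot A+\mu\cdot\widehat q+\omega$, for which $\frac{1}{i\hbar}H_{\mathrm{ref}}u\in\calT_u\calM$, and then estimate $\varepsilon_\calM(u)\le\frac{1}{\hbar}\|(H-H_{\mathrm{ref}})u\|$.

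First I would rewrite the kinetic operator via the identity $A^\dagger\cdot A=-\delta^2\Delta+\frac{1}{4\delta^2}|x|^2-\frac{d}{2}$ established in the proof of Proposition~\ref{pro31}, which gives $-\frac{\hbar^2}{2}\Delta=\frac{\hbar^2}{2\delta^2}A^\dagger\cdot A-\frac{\hbar^2}{8\delta^4}|x|^2+\frac{\hbar^2 d}{4\delta^2}$. Thus the kinetic term contributes an irreducible, origin-centred harmonic piece that the linear tangent functions cannot match. Next I would Taylor-expand $V$ to first order about the wave-packet centre $2\delta q$, writing $V=V(2\delta q)+\nabla V(2\delta q)\cdot(x-2\delta q)+W_{2,q}$, and recentre $|x|^2$ as $|x-2\delta q|^2$ plus lower-order terms. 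Choosing $\mu$ and $\omega$ to cancel every resulting linear and constant contribution then leaves the clean residual
\[
H-H_{\mathrm{ref}}=-\frac{\hbar^2}{8\delta^4}\,|x-2\delta q|^2+W_{2,q}.
\]
The triangle inequality splits $\frac{1}{\hbar}\|(H-H_{\mathrm{ref}})u\|$ into the harmonic mismatch and the potential remainder, which are precisely the two summands in the first inequality of~\eqref{localm}.

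Both summands reduce to the same Gaussian moment. Setting $y=x-2\delta q$, the density $|u|^2$ is centred at $2\delta q$ with covariance $\delta^2\Id$ by~\eqref{upro1}, so the fourth moment gives $\||x-2\delta q|^2u\|^2=\int|y|^4|u|^2\,\D x=\delta^4(d^2+2d)$, whence $\||x-2\delta q|^2u\|=\delta^2\sqrt{d^2+2d}$. This produces the harmonic term $\frac{\hbar}{8\delta^2}\sqrt{d^2+2d}$. For the second inequality I would invoke Taylor's theorem together with the hypothesis $|\partial^\alpha V|\le C_2$ for $|\alpha|=2$ to bound $|W_{2,q}(x)|\le\frac{C_2}{2}|x-2\delta q|^2$ pointwise, so that $\frac{1}{\hbar}\|W_{2,q}u\|\le\frac{C_2}{2\hbar}\delta^2\sqrt{d^2+2d}$. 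The Kramer--Saraceno solution, once its free phase is fixed to agree with the McLachlan one, coincides with $u(t)$ (its $z$-parameter already solves the same equation~\eqref{zeq}), hence it obeys the identical bound.

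The main obstacle is the bookkeeping in the second step: one must verify that, after extracting the harmonic term from the kinetic operator and the linear Taylor part from $V$, \emph{all} linear and constant terms---including those produced by recentring $|x|^2$ to $|x-2\delta q|^2$---can be absorbed into the admissible form $\mu\cdot\widehat q+\omega$, leaving exactly the centred quadratic residual and nothing else. A secondary subtlety is the remainder estimate: the pointwise bound $|W_{2,q}(x)|\le\frac{C_2}{2}|x-2\delta q|^2$ is the form that matches the stated constant, and one should read $C_2$ as controlling the quadratic growth of the remainder along the segment from $2\delta q$ to $x$.
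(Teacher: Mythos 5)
Your proposal is correct and takes essentially the same route as the paper: the paper also splits $H=H_e+R_q$ with $H_e=\frac{\hbar^2}{2\delta^2}\big(-\delta^2\Delta+\frac{1}{4\delta^2}|x|^2-\frac{d}{2}\big)+\mu\cdot x+\omega$ (your $H_{\mathrm{ref}}$, with the explicit choices $\mu=\nabla V(2\delta q)-\frac{\hbar^2}{2\delta^3}q$ and $\omega$ absorbing all constants, which settles your bookkeeping concern), invokes Proposition~\ref{pro31} to get $\varepsilon_\calM(u)\le\frac{1}{\hbar}\|R_qu\|$ with $R_q=-\frac{\hbar^2}{8\delta^4}|x-2\delta q|^2+W_{2,q}$, and evaluates the same Gaussian fourth moment $\||x-2\delta q|^2u\|=\delta^2\sqrt{d^2+2d}$. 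The only cosmetic difference is that the paper derives the bound via the nonexpansive projection identity $Hu-H_uu=i(\Id-P^g_u)(\frac{1}{i}R_qu)$ rather than your (equivalent) minimal-distance argument with the trial tangent vector $\frac{1}{i\hbar}H_{\mathrm{ref}}u$.
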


\begin{proof}
We linearize the potential $V(x)$ around the point $2\delta q$ and denote the quadratic remainder by $W_{2,q}(x)$,
\[
V(x) = V(2\delta q) + \nabla V(2\delta q)\cdot (x-2\delta q) + W_{2,q}(x).
\]
We then write the Schr\"odinger operator as $H= H_e + R_q(x)$,
where the exact Hamiltonian
\[
H_e =
\frac{\hbar^2}{2\delta^2}\left( -\delta^2\Delta +\frac{1}{4\delta^2}|x|^2 - \frac{d}{2}\right)
+ \mu\cdot x + \omega
\]
is defined with
\[
\mu = \nabla V(2\delta q) -\frac{\hbar^2}{2\delta^3}q\quad\text{and}\quad
\omega = \frac{\hbar^2d}{4\delta^2}+V(2\delta q) - \nabla V(2\delta q)\cdot 2\delta q + \frac{\hbar^2}{2\delta^2}|q|^2,
\]
while the remainder is given by $R_q(x) = - \frac{\hbar^2}{8\delta^4}|x-2\delta q|^2 +W_{2,q}(x)$.
By Proposition~\ref{pro31}, we have $\frac{1}{i}H_e u\in\calT_u\calM$ and therefore $iP^g_u(\frac{1}{i} H_eu) = H_e u$.
We thus obtain for the difference of the Hamiltonians that
\[
Hu-H_uu= Hu - iP^g_u(\tfrac{1}{i}Hu)= i(\Id-P^g_u)(\tfrac{1}{i}R_qu).
\]
This implies the estimate
\[
\varepsilon_\calM(u) = \frac{1}{\hbar}\|Hu - H_uu\| \le \frac{1}{\hbar}\|R_qu\|.
\]
Recalling that $|R_q(x)| \le \left( \frac{\hbar^2}{8\delta^4} + \frac12 C_2\right)|x-2\delta q|^2$, and
\[\||x-2\delta q|^2u\|^2
= (2\pi\delta^2)^{-d/2} \int_{\R^d} |y|^4 \exp(-\tfrac{1}{2\delta^2}|y|^2) \,\D y= \delta^4\cdot (d^2+2d),\]
we immediately get \eqref{localm} and the proof is completed.
\end{proof}

From the error estimate \eqref{localm} we find that the best choice for $\delta$ is of the form $\delta\sim \hbar^{1/2}$, since then the two additive contributions $\hbar/\delta^2$ and $\delta^2/\hbar$ are balanced.
However, we also note, that then the error is then just bounded in the semi-classical limit $\hbar\to0$.

\subsection{Comparion with thawed Gaussians}
Thawed Gaussian wave packets also depend on a width matrix $C\in\C^{d\times d}$, that is a complex
symmetric matrix, $C=C^T$, and has a positive definite imaginary part. They can be parametrised as
\[
\exp\Big(\frac{i}{\hbar}\Big(\frac12(x-q)\cdot C(x-q) + p\cdot(x-q) + \theta\Big)\Big),
\]
see \cite{FL}, \cite[II.4]{Lu}, \cite[\S3]{LL}. The manifold of such thawed Gaussian
wave packets has complex linear tangent spaces and thus naturally enters the framework of the
Dirac--Frenkel variational principle.
\begin{remark}
In the parametrization of the frozen wave packet \eqref{eq:frozen}, that we have used here, the prefactor
$-\tfrac{1}{4\delta^2}$ of the quadratic term corresponds to the matrix $-\tfrac{1}{2\hbar}\mathrm{Im}C$
in the above thawed ansatz. In the semi-classical limit regime, where $\hbar>0$ is considered as a small parameter tending to zero, one often assumes that the eigenvalues of $\mathrm{Im}C$ are bounded from below by a fixed spectral parameter $\rho>0$. An analogous assumption for the frozen wave packet is an upper bound of the form $\delta<\sqrt{\hbar/\rho}$.
\end{remark}

Let $W_{3,q(t)}(x)$ denote the cubic remainder of the Taylor expansion
of the potential $V(x)$ around the position center of the thawed variational wave packet $u_{\rm th}(t)$.
The proof of \cite[Theorem~4.4]{Lu} reveals that
\[
\varepsilon_{\calM_{\rm th}}(u_{\rm th}(t)) \le \frac{1}{\hbar}\|W_{3,q(t)}u_{\rm th}(t)\|,
\]
which is the analogous error estimate to the frozen one of Theorem~\ref{th35}.
Let $C_3>0$ be an upper bound on the cubic derivatives of the potential $V(x)$ and
$\rho>0$ a lower bound on the eigenvalues of the imaginary part of the width matrix $C(s)$ for $s\in[0,t]$. Then,
the above estimate implies
\[
\varepsilon_{\calM_{\rm th}}(u_{\rm th}(t)) \le C_3 \sqrt{\hbar/\rho^3}\ ,
\]
see \cite[Lemma~3.8]{LL}.
Hence, if the spectral parameter $\rho>0$ is considered as a fixed positive number, then the thawed Gaussian
have an accuracy of order $\sqrt\hbar$. In particular, the approximation error converges to zero in the semi-classical limit
$\hbar\to0$. However, in computational practice, thawed Gaussians occasionally spread in such a way,
that the spectral bound $\rho>0$ has to be considered as an additional small parameter, that
deteriorates the thawed Gaussians' approximation accuracy. In this regime, frozen and thawed
Gaussians are competing approaches. The literature shows successful applications of both of them,
see for example \cite{IJ,R,VB}.

\section{Conclusion}
We have characterized the McLachlan and the Kramer--Saraceno variational principles in terms of a metric and a symplectic orthogonality condition, respectively,
\[
\forall v\in\calT_{u(t)}\calM: \ \left\{ \begin{array}{ll}
\mathrm{Re}\langle v\,|\, \dot u(t) - \frac{1}{i\hbar}Hu(t)\rangle = 0 & \text{(McLachlan),}\\*[1ex]
\mathrm{Im}\langle v\,|\, \dot u(t) - \frac{1}{i\hbar}Hu(t)\rangle = 0 & \text{(Kramer--Saraceno).}
\end{array} \right.
\]
For both principles, norm conservation is possible under conditions on the manifold, namely,
$u\in\calT_{u}\calM$ (McLachlan) respectively $iu\in\calT_u\calM$ (Kramer--Saraceno) for all $u\in\calM$.
The Kramer--Saraceno variational solution is always energy conserving, while the McLachlan one satisfies
an a posteriori error estimate
\[
\|\psi(t)-u(t)\| \le \int_0^t \Big(\frac{1}{\hbar^2}\|Hu(s)\|^2 - \|\dot u(s)\|^2\Big)^{1/2} \ \D s,
\]
that allows for practical numerical implementations.

The time-dependent Hartree approximation has been formulated with a manifold of normalized
single particle functions. The resulting McLachlan and Kramer--Saraceno solutions are obtained
via the usual mean-field equations,
\[
i\hbar\dot{\varphi}_n(t)=
(H_n(t)+c_n(t)-E_0(t))\varphi_n(t),
\]
however, differ in terms of the relations satisfied by the
time-dependent gauge factors $c_n(t)$ and the total energy $E_0(t)$, that might be time-dependent
for the McLachlan solution. The above a posteriori error estimate is characterised
by the zero mean fluctuating potential.

For the manifold of frozen Gaussian wave packets, the Kramer--Saraceno principle leaves the phase parameter
of the variational wave packet undetermined, but otherwise results in the same equations of motion
for the position and the momentum center as produced by the McLachlan principle. For Schr\"odinger
Hamiltonians, we derive an a posteriori error estimate that
depends on the second order derivatives of the potential function.


\end{document}